\newtheorem{theorem}{Theorem}
\tikzset{
	->, 
	>=stealth, 
	node distance=2.5cm, 
	every state/.style={thick, fill=gray!7}, 
	initial text=$ $, 
    font=\footnotesize
}
\newcommand{\fanplugins}{\texttt{plugins} }
\begin{document}

\title{Towards Flexible Anonymous Networks}

\author{Florentin Rochet}
\orcid{0000-0001-5275-9308}
\affiliation{%
  \institution{University of Namur}
  \city{Namur}
  \country{Belgium}}
\email{florentin.rochet@unamur.be}

\author{Jules Dejaeghere}
\orcid{0000-0002-4970-3730}
\affiliation{%
  \institution{University of Namur}
  \city{Namur}
  \country{Belgium}}
\email{jules.dejaeghere@unamur.be}

\author{Tariq Elahi}
\orcid{0000-0001-7559-8383}
\affiliation{%
  \institution{University of Edinburgh}
  \city{Edinburgh}
  \country{United Kingdom}}
\email{t.elahi@ed.ac.uk}

\begin{abstract}
  Anonymous Communication designs such as Tor build their security on
  distributed trust over many volunteers running relays in diverse global
  locations. 
  In practice, this distribution leads to a heterogeneous network in which
  many versions of the Tor software co-exist, each with differing sets of
  protocol features. Because of this heterogeneity,
  Tor developers employ \textit{forward-compatible
  protocol design} as a strategy to maintain network extensibility.  This strategy aims to guarantee that 
  different versions of the Tor software interact
  without unrecoverable errors.  In this work, we cast  \textit{protocol
  tolerance} that is enabled by forward-compatible protocol considerations as a
  fundamental security issue.  We argue that, while being beneficial for the
  developers, protocol tolerance has resulted in a number of strong
  attacks against Tor in the past fifteen years.
  
  To address this issue, we propose Flexible Anonymous Network (FAN), a
  new software architecture for volunteer-based distributed networks that
  shifts the dependence away from protocol tolerance without losing the
  ability for developers to ensure the continuous evolution of their
  software.
  We
  \begin{inparaenum}[i)]
  \item instantiate an implementation,
  \item evaluate its overheads and,
  \item experiment with several of FAN's benefits to defend against a severe
  attack still applicable to Tor today.
  \end{inparaenum}
\end{abstract}

\begin{CCSXML}
<ccs2012>
   <concept>
       <concept_id>10002978.10003006.10003013</concept_id>
       <concept_desc>Security and privacy~Distributed systems security</concept_desc>
       <concept_significance>500</concept_significance>
       </concept>
   <concept>
       <concept_id>10011007.10011074.10011075.10011077</concept_id>
       <concept_desc>Software and its engineering~Software design engineering</concept_desc>
       <concept_significance>500</concept_significance>
       </concept>
   <concept>
       <concept_id>10011007.10011074.10011111.10011113</concept_id>
       <concept_desc>Software and its engineering~Software evolution</concept_desc>
       <concept_significance>300</concept_significance>
       </concept>
   <concept>
       <concept_id>10011007.10011074.10011111.10011696</concept_id>
       <concept_desc>Software and its engineering~Maintaining software</concept_desc>
       <concept_significance>300</concept_significance>
       </concept>
   <concept>
       <concept_id>10011007.10011006.10011041.10011044</concept_id>
       <concept_desc>Software and its engineering~Just-in-time compilers</concept_desc>
       <concept_significance>100</concept_significance>
       </concept>
 </ccs2012>
\end{CCSXML}

\ccsdesc[500]{Security and privacy~Distributed systems security}
\ccsdesc[500]{Software and its engineering~Software design engineering}
\ccsdesc[300]{Software and its engineering~Software evolution}
\ccsdesc[300]{Software and its engineering~Maintaining software}
\ccsdesc[100]{Software and its engineering~Just-in-time compilers}

\keywords{Tor; Anonymous Communications; Software Design}

\maketitle

\section{Introduction}

The Robustness Principle, as expressed in connection with the standardization of
the TCP/IP protocols~\cite{rfc793}, a cornerstone of the Internet, is stated
as: “Be conservative in what you do, be liberal in what you accept from others”.
This principle, buoyed by the success of TCP/IP, has become accepted good
practice for subsequent protocol design.
Fundamentally, it advises protocol designs to tolerate, without failing,
unexpected messages. From a systems engineering perspective, this law enables
protocols to be \emph{forward compatible} and sustain \emph{extensible} protocol
designs. Protocols that
are forward compatible allow processing messages from future versions
of the same software, where there may be parts that the now older version does
not know how to process and still perform correctly.
The application of the Robustness principle has had a considerable beneficial
effect on implementation, maintenance, and deployment of protocol features and
underlying distributed systems, making compatibility straightforward between
multiple protocol versions.
When the TCP/IP protocols were being designed, and the Robustness
principle was established, it was difficult to foresee the security and
privacy implications and the extent of what the Internet would
eventually become.

Since its initial design in the early 2000s, the Tor routing protocol~\cite{tor-design} also implements the
Robustness principle when
processing data and control information. Indeed,
robustness is crucial in a distributed and volunteer-based network composed of nodes
running many different versions of the Tor routing protocol. Nodes running older
versions of the codebase must tolerate messages with extra or updated data
formats from nodes running the latest version.
Rochet and Pereira~\cite{dropping-pets2018} showed that, as a
\emph{threat vector}, protocol tolerance could be exploited to convey
information between malicious relays that easily
defeat the anonymity enabled by Tor's incremental circuit design. In the same vein, we make the
observation that other
researchers have also unknowingly exploited this threat vector, such as fast and reliable
Onion Service's Guard discovery~\cite{hs-attack06, oakland2013-trawling,
	dropping-pets2018} or efficient
Denial of Service~\cite{sniper14, pointbreak-sec2019}. Furthermore, a
real-world attack sponsored by a
state agency has been observed in the Tor network using the same threat
vector~\cite{blogpost-relayearly, blogpost-didfbi}. All these attack designs
fundamentally exploited some form of protocol tolerance embedded in Tor, which was purposefully
engineered
for forward compatibility reason. Altogether, these developments raise awareness of the role this
principle plays in efficiently and reliably implementing well-known theoretical
attacks, like end-to-end traffic confirmation.

Our contributions are (in order of appearance):
\begin{inparaenum}

\item Inspired by prior work~\cite{dropping-pets2018},
we trace the history of recent significant vulnerabilities in Tor and shed light on
		the inherent tension that robustness introduces between allowing system
    designers enough latitude to create a maintainable
    network and limiting the surface area for the adversary to leverage
    (Section~\ref{sec:case}).

  \item Flexible Anonymous Networking (FAN), a novel methodology that builds
      on a new programming paradigm and execution model to address the risks inherent to protocol
      robustness. FAN's unique properties allow authorized developers to globally
      re-program the nodes of the network without operator intervention (Section~\ref{sec:fan}).
      It supports the network's extensibility through
      network-wide re-programming instead of
      built-in protocol tolerance.
      In brief, FAN is a design methodology for the software making up
      the distributed network as a core, stable set of basic features --- A
      network engine, a Cryptography engine, and FAN components. These latter components are a
      JIT compiler able to compile ``plugin'' programs written in a high-level
      language such as \texttt{C} or \texttt{Rust} and compiled to a RISC VM,
      such as \texttt{eBPF}, and finally a Plugin Manager that manages the
      logic of these new capabilities. Moreover, FAN must support an
      OS-independent secure protocol update mechanism to enable unattended
      upgrades of relays issued by the authorities. One efficient update mechanism
      is empirically evaluated in this work.

    \item A Transparency mechanism to address potential governance issues linked to FAN's capabilities
    that enables relay operators to
      transparently audit the plugin issuance process, enabling trust through
      proof of correct behavior (Sections~\ref{sec:case}~\&~\ref{sec:fan}).

    \item An open-source implementation instance of the FAN methodology applied
      to the Tor codebase (Section~\ref{sec:instance_architecture}).

	\item A case study of our FAN instance applied to defending Tor against the
    Dropmark attack~\cite{dropping-pets2018, prop344}, where we implement and
      evaluate a FAN-enabled defense against a powerful attack on Tor that
      is relevant today. We argue that efficiently and provably defending
      against the Dropmark is infeasible without FAN's specific properties
      (Section~\ref{sec:casestudy}).
\end{inparaenum}
\section{Motivation}

\subsection{Threat Model}
Tor assumes a local adversary who can observe some fraction of the network traffic, operate or 
compromise some fractions of 
relays, and interact with clients and other relays by injecting, modifying,
dropping, or delaying traffic.  Adversaries' goals may vary, from disrupting or denying availability to 
linking clients to their destinations. 
The adversary, and the traffic it generates, may also deviate from the Tor protocol.
As we shall discuss below, this deviation is tolerated and
is historically the
result of a desire for Robustness to enable forward-compatible protocols, which has been used by 
adversaries to achieve the mentioned goals.

\subsection{Robustness: A Blessing and a Curse}
\label{sec:case}

We now trace several recent examples of protocol tolerance in action that led to
critical attacks against anonymity or availability of the Tor network, leading
us to argue that it is a systemic problem; a tension between developers who try
to ensure the continuous evolution of their protocol design and adversaries
taking advantage of the ``freedom'' built into the design. Directly linked to
this research, the Tor developers recently (August 2023) publicly acknowledged
the Robustness principle as an engineering issue leading to ``highly-severe''
attacks~\cite{prop344}.

\paragraph*{An Illustration: Extending Tor Circuits}

The current three-hop default for Tor circuits is a trade-off between
performance and anonymity. However, this default is not hard-coded, and
circuits of other lengths are possible. This design choice to parameterize the
default allows it to be easily changed when needed. However, because path
length was not restricted, this allowed researchers to discover a practical
congestion attack~\cite{congestion-longpaths} using infinite hops with the same
relay selected multiple times in a Tor circuit. The Tor project introduced the
new \texttt{RELAY\_EARLY} cell type to mitigate this attack~\cite{prop110},
while maintaining protocol tolerance. Unfortunately, three years later in 2014,
this mitigation enabled a state-level agency to exploit the still present
protocol malleability to correlate traffic with absolute certainty, which
reliably broke Tor's anonymity~\cite{blogpost-relayearly, blogpost-didfbi}. The
Tor protocol was then made more \textit{conservative} (commit 
\texttt{68a2e4ca4baa5}) to prevent tolerance abuse of the new cell type.
	
Moreover, deploying the new \texttt{RELAY\_EARLY} cell type required  backward compatibility with 
	relays that 
  have not yet updated to the latest version of the protocol, and these relays were still
  susceptible to the congestion attack. A tension 
  exists between the desire to quickly close an attack vector while also retaining most of the relays in 
	the network. A grace period allows maximizing the number of updated relays before 
	making the version mandatory and evicting out-of-date relays. In this case, it took three years 
from proposal to implementation (version 0.2.1.3-alpha) and another year to full deployment
  (commit \texttt{9bcb18738747e}), while the congestion attack was still
  possible to perform in the intervening four years.

 To summarize, this historical review highlights that designers and developers desire the 
    elimination of flaws in their protocols without limiting extensibility. From
    above, and other 
    historically similar events~\cite{hs-attack06,  oakland2013-trawling, sniper14,
			dropping-pets2018, pointbreak-sec2019}, we see that extensibility, through protocol tolerance, 
			not only enables new features but also unforeseen vectors for attack. Finally, we note the 
			uptake of corrective measures may take multiple years, leaving the network and its users
      vulnerable in the interim.

\paragraph*{The key challenge}

Developers require the \textit{agility} to enable new features, including security
fixes, to be quickly and widely propagated. A natural, and naive, solution
is \emph{automated updates}. However, in our setting, it is radically more difficult.
Automated updates are usually seen from a client's perspective, where a program, upon execution, first 
checks and fetches the latest version (if available), before running. 
In a distributed system such as Tor, where we are concerned with volunteer-run
\emph{relays}, the situation is different. These
relays are usually left unattended and running on a dedicated machine with
the only task of relaying traffic---as advised by the Tor project. The
update pace of the network depends today on individual relay operators, either
following their OS distribution package update policy, or choosing one of their
own. As Ajmani \textit{et al.} point out in their generic solution for distributed
systems~\cite{ajmani2006modular}, the availability requirement makes the problem difficult. In Tor's
case, it is even more challenging; essentially, a distributed network is required
to migrate altogether to a new version, without disruption of service, and
without privacy degradation during the process (e.g., network partition). Fulfilling these
requirements as well as practical issues such as \begin{inparaenum} \item the
  risk of failing to bootstrap the network, \item the need for per-OS init
  scripts and the cost of maintaining them, and \item the security concerns that
  an auto-update system based on current practices may cause to volunteers,
altogether explain why auto-updates for a distributed system
such as Tor do not exist.\end{inparaenum} 

\subsection{Tor Project's Direction \& Our Vision}

Tor developers produce new versions of Tor and
make both the source and binaries available on their software repositories and
make announcements about the availability of new versions on their official channels.

Relay operators monitor these official channels to keep track of new releases, which they update to at 
their own discretion. On the other hand, Linux
distributions do not monitor these channels. Instead, it is the Tor project, or some party that
wishes to make Tor software available on the distribution, that produces a package (that
adheres to the distribution's rules) and submits it for inclusion. It is to the
distribution's discretion if the package is included or not (and this applies
to updated version of already accepted packages). Often, as is the case with
many popular mainstream Linux distributions, the packaged version of Tor is
many versions behind the official latest release. One critical outcome of the
discretionary nature of deploying/adopting versions, is that  
the current Tor
network consists of relays running several major Tor versions. This fact forces
the Tor developers to adopt techniques to build tolerance within their protocol
design and features, in an attempt to avoid issues over time, especially
since tolerance avoids evicting relays running older versions from locations and volunteers that 
together contribute to
Tor's diversity, which is critical for Tor's security model. This was the situation
up to 2019, but since then, the Tor project decided to slowly move away from
Robustness by enabling strict protocol negotiation, and eventually with the
plan to reject messages that are invalid according to authorized versions. Such a plan
leads to End-of-Life cycles, in which relays
running software versions that are no longer supported are rejected from the
network, even when their contribution is meaningful in terms of bandwidth or
specific security attributes (e.g., location diversity).

As of early 2024, Tor has, however, the downsides of both approaches. Indeed, the
Tor routing protocol is still applying the Robustness principle, which still
enables stealthy attacks such as the Dropmark~\cite{dropping-pets2018}, and
applies protocol negotiation with end-of-life policies, rejecting hundreds of
relays every year~\cite{eol}. By the end of 2024, the Tor project should have
implemented Proposal 349~\cite{prop349} ``Client-Side Command Acceptance
Validation'' which employs state machines to check and verify the various Tor
protocols' state on the client, and tear down any circuit that receives an
invalid message that would not yield a valid state transition. This approach
seems so far to only be considered for clients, and Robustness is maintained on
relays

We believe protocol negotiation that rejects relays is particularly
unfit for a volunteer-based system, with operators in many cases not directly
responsible for running an outdated relay (e.g., the default OS distribution repository providing
an outdated package).
Programming languages and compiler tools have sufficiently improved to revisit
how we build and distribute a distributed system, and that we can obtain the
benefits of the Robustness principle without its security downsides, as well
as the benefits of protocol negotiation without rejecting anyone. Our technical
proposal to successfully achieve these two seemingly contradicting properties is our main
contribution. We furthermore show in our case study
(Section~\ref{sec:casestudy}) that with these
two properties in hand, we can solve and prevent one
of the most efficient attack vectors against Tor's
anonymity.

Our approach involves a new kind of protocol negotiation design, where instead
of only negotiating the version to use and rejecting non-compliant nodes, we additionally offer a method 
to 
\emph{re-program} parts of the software of non-compliant nodes,
hence propagating new code as part of the negotiation protocol within circuit
creation.  The challenge is to enable this without disruption, and without
partitioning the network. Moreover, this process should be transparent and
auditable to Tor clients and Tor operators, and should not depend on a
particular distribution or operating system to maintain diversity and the
ability to run the Tor program on any operating system. The system should also
have safety guarantees in case of a bootstrapping error. We also require minimization of the  bandwidth 
overhead that such a protocol could create.

Unfortunately, this list of requirements is so restrictive under the current methods that we need to 
re-think how
we build and distribute software. We believe that such a
challenging task can be solved by having the software of a distributed system
such as Tor include a Just-in-Time compiler (JIT), which would be leveraged
by a specific and secure version negotiation protocol to propagate and update
portion(s) of the binary to change its features and effectively ``update'' the
network. We propose a method for a distributed system that provides the ability to re-write and
extend itself while it is running, enabling new connections to switch to the
new code, while the existing connections finish operating over the old code. In
this paper, we use the eBPF bytecode compilation target to propagate the
updates, and an eBPF-to-x86/arm JIT compiler to locally re-write the binary once the
update is received without restarting the service and causing any disruption
and network partition.

\subsection{Background on eBPF, LLVM and JIT Compilation}

Our solution reconciles protocol tolerance and its potential risks by
developing a novel architecture to secure distributed networks. This
architecture takes advantage of recent progress in programming language and
compilers, such as simplification of compiling \texttt{C} or
\texttt{Rust} code to machine-independent RISC instruction sets. 
eBPF~\cite{eBPF-site} (also called BPF) is an example and mainly known for being used in the
Linux Kernel to improve the kernel's flexibility to specific needs.  

However, eBPF is a \textit{general}-purpose RISC instruction set and independent of
the Linux Kernel. One can write
programs in a subset of \texttt{C} and compile to the BPF architecture using the
LLVM~\cite{LLVM_CG04,LLVM-site} clang frontend and LLVM backend to obtain a BPF
program.
This BPF program contains bytecode instructions matching the BPF
Virtual Machine, and can be either interpreted or compiled once more to native
opcodes. This process of compiling or translating from the BPF bytecode to
native opcodes is called JIT (Just-In-Time) compilation, and it elegantly provides code
portability and efficiency. In our solution, eBPF programs execute within the same process as
their caller and adhere to the host OS's security model. To further isolate the
eBPF program from its caller, we also enable sandboxing of the memory accesses of
the eBPF programs, which we describe in Section~\ref{sec:instance_architecture}.

\section{FAN Architecture}\label{sec:fan}

Based on new software execution models enabled by technologies such as LLVM,
Flexible Anonymous Networks (FAN) is a design for deploying and maintaining an
anonymous network that can seamlessly change its
behavior---through the addition and/or removal of protocol features and
components---without having to restart relays or
disturb users' connections. 


\subsection{FAN Design Features} 
\label{sec:fan_features}

We have argued in Section~\ref{sec:case} that the current method for designing, iterating, and
deploying
features central to a security property (anonymity in the case of Tor) is
lacking the necessary control to handle unforeseen issues. 

We now take a step back and consider a more general view and aim at establishing a set of design features 
that would allow a distributed system
to grow beyond what an auto-update mechanism for classical software can achieve,
discuss potential drawbacks and illustrate some of the benefits through a case
study.

\paragraph*{Built-in Extensibility.}
\label{feature:2}
FAN enables high-levels of \textbf{expressiveness} by exposing internal
data structures, memory, and functions to the plugins, leading to similar expressive
abilities as high-level languages. FAN
supports \textbf{ease of deployment} by offering fine-grained control of the
ability to remotely re-program components of the network by sending plugins to them. In addition, this
novel capability offers \textbf{speedy propagation} of
patches or
feature updates throughout the network, as they become available. This
contrasts with the current update lifecycle of distributed
networks such as Tor which can take up to several years to propagate major
changes. Compared to a classical auto-update
mechanism which would swap binaries and restart the program, FAN achieves
better safety and robustness, and unlocks novel properties for Anonymous
Communication: FAN \fanplugins can be \textbf{connection specific}, and
different plugins can be applied at the same code location depending on the
context.  This capability leads to a novel extensibility paradigm that could
potentially allow different applications, under an appropriate threat model and safety
\& fairness considerations,  to make relays behave differently for their own
set of users through connection-specific re-programmed features.

\paragraph*{Security \& Safety.}
\label{feature:3}
FAN-based networks depend on multiple layers of security. For
\textbf{operational security}, the virtualization technology (detailed in
Section~\ref{sec:instance_architecture}) provides the sandboxing and resource management
capabilities that ensure that FAN \fanplugins may not exceed their memory
consumption limit, and provides the necessary control to safely handle other failure
modes of plugins.  The \textbf{secure deployment} of FAN \fanplugins ensures
that only \textit{trustworthy} and \textit{authentic} code may be deployed in a
FAN. We control these aspects through our FAN Plugin Transparency
design, detailed in Section~\ref{sec:fan_transparency}.

\paragraph*{High-performance.}
\label{feature:4}
FANs introduce \textbf{negligible overhead}, in terms of
computation, storage, and latency. The analysis is provided in
Section~\ref{sec:overhead_analysis}.

\paragraph*{Stronger Feature Control.}
Due to the various existing packaging policies in OSes that conflict with a
distributed network deployment cycle (e.g., Tor and
Ubuntu~\cite{debian-tor-conflict}), and due to inattention from some relay
operators, it is difficult to deploy new major features in volunteer-based
systems. Recall from section~\ref{sec:case} that this has historically
created many protocol-level vulnerabilities that are not bugs, but design flaws
caused by a lack of control over the software's deployment.  We expect a FAN binary to be
composed of a few very stable core features and follows a release cycle that
matches operating systems' major release lifecycle. For example, a new core
might be released every 4 years, matching Ubuntu LTS's lifecycle. Plugins should
not be expected to be compatible to more than 2 cores at any given time. Any
update of the core system within its expected lifetime follows regular
packaging as it is done today, and should be limited to quality of life
improvements and vulnerability fixes covering the core code.

\subsection{FAN Plugin Design}
\label{sec:design_architecture}

\begin{figure}[t]
  \centering
  \includesvg[inkscapelatex=false,width=.93\columnwidth]{fan_generic}
  \caption{Flexible Anonymous Network generic architecture. The hooked plugins
    can be global or connection specific. That is, multiple different plugins
    can potentially be called from a same hook. PM. stands for Plugin
    Manager.}
  \label{fig:fan}
\end{figure}

FAN source code, e.g., our FAN Tor-prototype described in
Section~\ref{sec:instance_architecture}, embeds a JIT-compiler and
can link the JIT-compiled plugin
bytecode to specific
hooks in the FAN software. The FAN source code and the JIT compiler
are
compiled into the FAN binary. When the execution of this binary comes to
a hook, it checks whether a plugin may be executed, otherwise, we run the
default code for that hook. Each hook defines all the information required to
attach a plugin to
that hook in the FAN binary. Depending on the governance model for the
FAN, this information includes
resource budgets (CPUs cycles, memory allocation
size, etc.), external access controls (OS-like file-level authorization,
network
authorization), internal access controls (i.e., the internal FAN data
and structure(s) that a plugin may access, modify, extend, or internal
functions that may be called) and required cryptographic authorizations for
bytecode verification (described in Section~\ref{sec:fan_transparency}).

FAN plugin features are written in a high-level language, e.g., C or Rust, and
compiled independently of the FAN binary, targeting a bytecode
representation. This bytecode targets some Virtual Machine and is not
executable right away (Figure~\ref{fig:fan}(1)).
The plugin has associated meta-information that specifies the hook location
in the FAN binary where the bytecode may be executed
and the capabilities it requires
(Figure~\ref{fig:fan}(2)). A
FAN plugin may attach to one or more hooks with a shared context between the
hooked code.  The meta-information provides this context, which is
instantiated
at loading time (e.g., sandboxed memory). Indeed, the FAN plugin may be
composed of several bytecode files that may be loaded for the same context but
hooked in different locations in the FAN binary (Figure~\ref{fig:fan}(3a)).

The first time the FAN binary loads a plugin (Figure~\ref{fig:fan}(3b)), the plugin
bytecode is
JIT-compiled to the
appropriate machine-level code and then attached to the specified hook. The
same process executing
the FAN binary may then execute this attached code, as often as
needed. Based on
the governance and threat models, specific security measures, like sandboxed
execution (Figure~\ref{fig:fan}(3c)) and restricted access to resources, take effect while executing the plugin
code.

\subsection{FAN Transparency Design}
\label{sec:fan_transparency}

To alleviate trust tensions between developers and relay operators (and users) we propose our Plugin
Transparency design that is adapted from the design in
PQUIC~\cite{de2019pluginizing}, itself inspired from
CONIKS~\cite{melara2015coniks} used for key management. Our design provides relay operators (and 
users) the ability to verify
transparently, or audit after the fact, the plugin issuance process, which
provides public cryptographic evidence of the central authority (i.e. developers) and independent
log's misbehavior.  Operators can observe when a plugin is issued and when it
will be pushed to their node(s). If desired, they can also protest, which is
globally transparent. Protests serve as a community signal; while it cannot prevent
a plugin to eventually be pushed to the network, it can be a tool for convincing
the central authority to withdraw a plugin. The design discretizes system time into epochs,
denoted by monotonically increasing integers $\in \mathbb{Z}_0^+$.

The Plugin Transparency design separates the stakeholders into three sets:
The FAN/plugin developers and relay authorities (if any), network relays and
clients, and \textbf{independent} entities hosting the FAN Transparency Log
(FTL) (a Merkle Tree \emph{variant}~\footnote{~We call it Name-Structured Merkle
  Tree List.  Proving that an element is present or not has the same complexity
  ($\Theta(D + \frac{N}{2^D})$.} with logarithmic complexity for all request
types, and other important properties as detailed next).

The FAN Plugin Transparency design provides several security properties
required to build a trustworthy environment. Similar to the threat modeling
approach of Certificate Transparency~\cite{laurie2021rfc} for the web, or Key
Transparency~\cite{melara2015coniks} for end-to-end encrypted messaging, reputation is gained from
public and cryptographic evidence of correct behavior. Any misbehavior is also
publicly auditable and the environment supports appropriate reactions to keep
the environment reputable. However, we propose design tweaks to
improve performance of proof of absence requests, support potential
third-parties, and capture and map misbehaviors into an existing legal
framework (Trademarks). That is, malicious plugins from a third-party would
have to commit an infringement using a name they do not own.

\begin{inparaenum} 

\item \emph{\textbf{Multiple Namespaces}} This paper discusses a centralized
  governance model in which a single trademark for plugins would be accepted by
  the loading logic on relays. However, we aim our transparency design to be
  trivially extensible to many independent trademarks. Therefore, plugins are arranged in the tree
  following the naming convention: \textit{name := trademark\_name/hook\_name}
  leading to a string-based UID for each plugin.

\item \emph{\textbf{Issuance safety and authenticity}}.
  Multiple FAN developers are cryptographically involved in plugin issuance,
  avoiding a single point of failure. Plugin issuance contains a threshold
  signature and its meta-data including the plugin's name, version, protest and push
  epochs, and status (``deployed'' or ``withdrawn'').  Plugins are issued to
  FTLs. Using TUF~\cite{samuel2010survivable} to provide key compromise mitigation for this specific 
  task would be
  an appropriate choice. A plugin's source code to bytecode ahead-of-time-compilation must be 
  reproducible.

\item \emph{\textbf{Detection of spurious plugins}}. If an FTL maliciously
  injects a plugin within a FAN's namespace, the FAN developers can efficiently
    detect it and cryptographically prove the FTL's misbehavior, with a
    proof of availability. Both the detection and proof run in $\Theta(D +
    \frac{N}{2^D})$ for $N$ plugins and D the tree's depth. The detection needs
    to be done once per epoch, for each existing hook. Note that the detection
    of a malicious entry in our design runs much faster (logarithmic to $N$)
    than typical related works on transparency (linear to $N$). See
    Appendix~\ref{app:sec_analysis} for more details.

  \item \emph{\textbf{Secure and human-meaningful plugin names}}. Plugins'
    names are uniquely and cryptographically mapped to the plugin's bytecode,
    which is a main difference to Certificate Transparency or Key Transparency
    which do not enforce any mapping.  That is, the plugin's position in the
    tree depends on its name's hash value prefix (matching the tree's depth).
    The FAN developers periodically check the proof of availability produced by FTLs
    and verify that no other plugins spoofing the same name are present within
    their branch. Compared to other transparent tree structures such as indexed
    trees (e.g., in CONIKS~\cite{melara2015coniks}), our choice of ordering the
    tree from unique plugin names (e.g., fan.project/my\_plugin) would force
    misbehaving entities to commit to a potential trademark infringement,
    forcing the attacker to use the same namespace, for which a legal framework
    already exists. Different namespaces may collide to the same prefix, hence
    the same position in the tree. The tree's leaf holds a list data structure,
    and the expected number of namespaces per leaf can be directly controlled
    from the size of the binary tree. Having potentially multiple namespaces
    per leaf does not conflict with the security property, but adds a ``load
    factor'' to the proofs' complexity (the $N/2^D$ term), which can be
    engineered to be $\approx 1$ under a regular uniform hashing output
    assumption. Thanks to these design constraints, a proof of absence also run in
    $\Theta(D + \frac{N}{2^D})$ instead of the usual $O(N)$.

  \item \emph{\textbf{Non-equivocation}}. The FTL cannot equivocate plugin
    availability to members of the network (e.g., serving a proof of
    availability to one
    relay and a proof of absence to another) without collusion with the FAN
    developers. In the presence of a single honest FTL, the collusion
    behavior will
		 eventually be detected with cryptographic evidence. (e.g., when the
		 Signed Tree Roots are compared).
 \end{inparaenum}

\paragraph*{Functional Overview}
FAN developers have three sets of actions: The first is to issue and withdraw
new and old plugins, respectively. To issue a plugin, a name and its attached plugin are sent to
the FTLs with meta-information containing a valid threshold signature and a
\textit{protest epoch $E_{protest}$} and \textit{push epoch $E_{push}$}, 
where $E_{protest} \leq E_{push}$. FAN developers can choose to set the protest epoch 
to the future, allowing relay operators the opportunity to audit the plugin 
and potentially officially record a protest. For transparency to be effective, it is important to offer a
deterministic build system to allow reproducibility of the signed bytecode. To withdraw a plugin, only
the name and push epoch is set.
If the protest epoch passes and FAN developers do not withdraw the plugin, the
push epoch defines the plugin's availability in the ACN. In either case, the
plugin will be included in the FTL's Tree at the next epoch. A withdrawn plugin
is still included in the Tree but marked as ``withdrawn'' by appending the
signed withdrawal order to the plugin's meta-info within the Tree once the push
epoch has passed. Second, FAN developers may ask the FTL for a \emph{proof of
availability} for a given plugin, which they can verify to detect the 
inclusion of
spurious plugins (later defined). This proof authenticates the plugin's inclusion
in the FTL's Tree.  The third action is to broadcast cryptographic evidence 
of
the FTL's misbehavior (e.g., cryptographic evidence of a spurious plugin). 

The FTLs have three actions. The first one is building
the Tree for each epoch. The Tree includes all plugins
available to the network at the current epoch. The second action is to
generate proofs and respond to
\emph{Proof of availability} or \emph{Proof of absence} requests for a
given plugin name. The third action
is to collect relays' signed protests against a given plugin received until 
the protest epoch. The signed
protests are appended to the plugin's meta-information.

The relay operators' actions are the following. First, they can verify, 
fetch or share
a \emph{Proof of availability} or \emph{Proof of absence} from an FTL with
regard to the epoch values of a given issued plugin.  Second, relay
operators can optionally interact with the FTLs before the protest epoch
elapses to formally protest against an upcoming
plugin(s) (using their relay's identity key). 

These protests are informational (but globally visible), and FAN developers 
can then decide to withdraw
the plugin or not. If not, then the plugin is pushed to the whole network in 
the 
push epoch, as defined in the plugin issuance order. Relay operators can use 
this information to decide whether to continue participating in the ACN or not.

 \paragraph*{Plugin Governance}
 A question remains about how to set a realistic workflow to populate plugins from the
 central authority, and manage them in interaction with the community. We would suggest a periodic
 workflow anchored to a periodic event. For example, it could be meaningful that
 the Plugin Transparency design chooses an epoch of 1 week, making release of
 new plugins effectively a weekly process, say every Monday Noon UTC. The push
 epoch can be set once a week as well, and the time difference with the plugin
 release would define the protest window.  We believe that anchoring the release
 of plugins to a periodic time event would be human-centered and ease
 community interaction. Once a plugin is released, it would be propagated
 through a negotiation phase during circuit constructions, when nodes
 connect to each others to build a circuit. We evaluate this method in
 Section~\ref{subsec:propagation}.

\section{Design Implementation}
\label{sec:instance_architecture}

We implement a prototype FAN instance using Tor's codebase forked from Tor
version 0.4.5.7. As a virtual machine abstraction, we use eBPF~\cite{eBPF-site}, which is also used in the 
Linux Kernel~\cite{kernel-bpf}. That is, eBPF is the
chosen bytecode compilation target for this prototype depicted in
Figure~\ref{fig:fan} and offers code portability. In this research work, we
wrote the extended features in
\texttt{C} and compile them to eBPF using LLVM's clang compiler that supports
compiling from a subset of the \texttt{C} language towards eBPF. This capability
leveraged from the \texttt{LLVM} project and integrated in our Tor instance
offers the required \textit{expressiveness} to add, modify or replace 
existing protocols
and features within Tor. Plugins can include headers from the Tor source code,
indirectly call existing functions within the Tor source code or access fields and
variables through the plugin manager interface.

As Figure~\ref{fig:fan} shows, to execute plugins, we first need to compile the
eBPF bytecode to native machine code. This capability offers \textit{high
  performance} to our FAN design (see Section~\ref{sec:poc} for details). We
add a modified version of uBPF~\cite{ubpf}, a JIT compiler of eBPF bytecode
for x86 and arm64 targets. In the original Tor codebase we implement  a \emph{plugin
  manager} module. It can load, compile, and link within the Tor codebase and
execute eBPF bytecodes with a shared context, using a shared area of memory for all
the bytecode files belonging to the same namespace (i.e., each plugin has a
namespace). Each of the bytecode files defines what we call an \textit{entry
  point} for the plugin, i.e., a function that can be \textit{hooked} and
called from within the Tor binary similar to a \texttt{main()} function. When a
plugin is loaded, we create a new mapping in the virtual address space of the host, which is intended to hold the
plugin's code. We change this new memory region's flags to give read and executable
rights for the host (using the syscall \texttt{mprotect()}). At
execution, when the code meets an entry point, it jumps to the host's paged
memory linked to the entry point, and continues its execution with the code that
was stored there when the loader JIT-compiled the plugin from eBPF to machine
code. Eventually, the entry point's function return instruction jumps back to its
caller as regular x86/arm64 would do.

Each plugin has its addressable memory sandboxed within the host's (i.e. FAN) process. The memory allocated within
plugins belongs to a dedicated continuous buffer allocated by the FAN process
when JIT-compiling it. During this process, the compiler adds instructions to
check boundaries for the dereferenced memory, and cleanly exits the plugin's
execution if it tries to access memory outside its allocated range. The
plugin receives a virtual address space, and the host (the FAN process) records
an unsigned offset to translate the plugin's address space back to the process' address
space. When the plugin dereferences a pointer, the JIT-compiled instruction
checks if the memory address translated back to the host's memory address space
and size requested are correct (i.e., below the highest address within the dedicated continuous buffer).
This design also means that plugins cannot directly access structures and
memory allocated by the FAN process, since they would be outside their
allocated range.  To access them, our FAN implementation supports indirect
getter and setter functions for various protocol-related data structures. This
method also allows the host to expose what it judges necessary to plugins,
while also supporting extensibility (i.e., the indirect access can also get
extended, and data structures can also be extended). This design makes the
distributed network robust to faulty plugins, and hardens against the exploitation of
potential memory vulnerabilities within the plugins themselves.

The plugin manager module of FAN uses human-readable instructions to load the
entry points. The file is parsed according to the plugin manager's
specification, and the plugin is then hooked as indicated within the
\texttt{.plugin} file if the core protocol supports the selected hook.

Our prototype integration (Plugin Manager) and plugins contain
several thousand lines of \texttt{C}. The modified version of uBPF contains
several tens of thousands lines of \texttt{C}. While we merely modified a
fraction of these to fit our purposes, such code would need to be maintained
for a real integration. FAN's architecture allows independence
between the design of the anonymous distributed network, and the design of the
virtual machine linked to each node. Therefore, 
in the future, recent initiatives such as the
\textit{Wasmer}~\cite{wasmer} project or \textit{Wasmtime}~\cite{wasmtime}
project would be good fits for providing such a security-wise critical abstraction,
in the same way some projects have been fit to write cryptography independently
of any system, such as OpenSSL, which is the current embedded cryptography
engine in Tor. Our code is available online:
\url{https://github.com/flexanon/tor_ebpf}.

\subsection{Engineering Hooks' location}

Within the core of the distributed software's source code, there is no technical restriction on the location 
of hooks. However, we recommend adding hooks for protocol
operations where Robustness was initially engineered, to essentially replace
forward compatibility built from the Robustness principle which currently
benefits both the developers (having many versions deployed without
incompatibility) and the adversary (exploiting protocol tolerance). Forward compatibility would then be 
built from hooks and Robustness would be gained from the
ability to hook code to deal with otherwise previously unknown message. As a result, by
default, any unknown message that cannot be handled by an existing plugin
would be invalid, rejected and any required cleanup would need to take
place (e.g., circuit tear-down), effectively ensuring no tolerance policy for
unknown messages, but still retaining the ability to add/modify or delete a
protocol message and its code processing logic globally to the network.

In our prototype implementation, we replace with hooks the existing Robustness within the Relay
protocol and within the Circuit padding protocol implemented in
C-Tor~\cite{kadianakis2021pcp}. The inserted hooks are able to handle any new
protocol event, and support extensions to add new capabilities to major pieces of Tor. 
Other parts and subprotocols of the Tor system may also receive hooks, however, in this work we cover 
only
the main locations required for our use cases and proof of concept
demonstrations. An example of a hook written in C-Tor is provided in
Appendix~\ref{cf-code}.

\section{Performance Evaluation}
\label{sec:poc}

\subsection{Methodology}

We focus our performance analysis on three questions: First, we are interested
in \textit{the time it takes to propagate new code over the whole network}. To answer
this question, we use Shadow~\cite{shadow-ndss12} and experiment with new plugin deployments from
different vantage points. Second, once we have received the plugin, we're
interested in \textit{the time it takes to load it, compile it, and be ready for
  execution}. This analysis is performed by recording statistics from the
plugin manager module when it loads plugins. Finally, we are interested in
\textit{the runtime overhead of a typical plugin}. To answer this question, we
build a testbed designed to stress-test the usage of plugins.
Altogether, this analysis should demonstrate the feasibility of our solution.

\subsection{Time to propagate within the full network}
\label{subsec:propagation}

We evaluate how long it takes for a new plugin to be distributed across the
network.  To this end, we created a lightweight negotiation and exchange protocol which
negotiates and propagates the latest code during the circuit construction
phase, as a simple extension to Tor's circuit handshake. Therefore, nodes
synchronize on the latest version that is used directly in the handshake phase, and
update each other if necessary---independent of the OS and hardware used on
each relay. When the circuit is built, every node is guaranteed to run the same
(latest) version on the circuit.

We set up a Shadow simulation with a 10\% scaled-down
version of the Tor network as of April 2021 (660 relays) to carry out the
measurements. Five plugins were deployed during the simulation (see Table~
\ref{table:overhead}), which are later also used in the case study in Section~\ref{sec:casestudy}.
We measure the time between the introduction of the plugin in the network and the reception 
by the relays. 
Two scenarios are simulated: seeding\footnote{~Seeding: making the plugin first available
  on those relays and relying on the protocol to propagate them to the other
  relays. Among the thousands of operators, we may assume that some of them are
  actively tracking latest releases and pull the last version when available,
  seeding their node in the process.}
the new plugins from the three authorities in the simulation 
and seeding the plugins from the three fastest relays in the simulation.
The measurements are shown on the left of Figure~\ref{fig:update-time}.

From the left graph in Figure~\ref{fig:update-time}, we observe a fast and steady propagation
of the new plugin across the network. The propagation directly depends on the
overall client activity, as exchange happens during circuit creation. Fast
relays have a higher selection probability for circuits, hence seeding these will provide the fastest 
expected propagation.

From our experimental measurements, it takes 12.9 seconds for a new plugin to be distributed
to 90\% of the relays and after 72.8 seconds, more than 99\% of the relays received the new
plugin (in the worst case simulation). During the first few seconds of distribution, the number of relays
receiving the plugin grows exponentially.

We compare the time needed for our plugin distribution method with the time needed for new Tor
versions to be adopted, depicted in the right graph in Figure~\ref{fig:update-time},
based on metrics from the Tor Project~\cite{tor-metrics}. It is clear that it takes time for independent 
volunteers to adopt new versions and this forces Tor to employ forward compatibility to ensure tolerance 
between the various behaviors across co-existing version.
These observations are not specific to Tor; any distributed network has similar software distribution 
issues.  Note, Tor version 0.4.5 is a long-term support
version of Tor, with a lifetime of 2 years (104 weeks)~\cite{tor_coretorreleases_2023}.
According to the metrics, it took nearly half the lifetime of the version to
reach 90\% of the relays.

\begin{figure} \centering
   \begin{minipage}{.49\linewidth}
     \includegraphics[width=\linewidth]{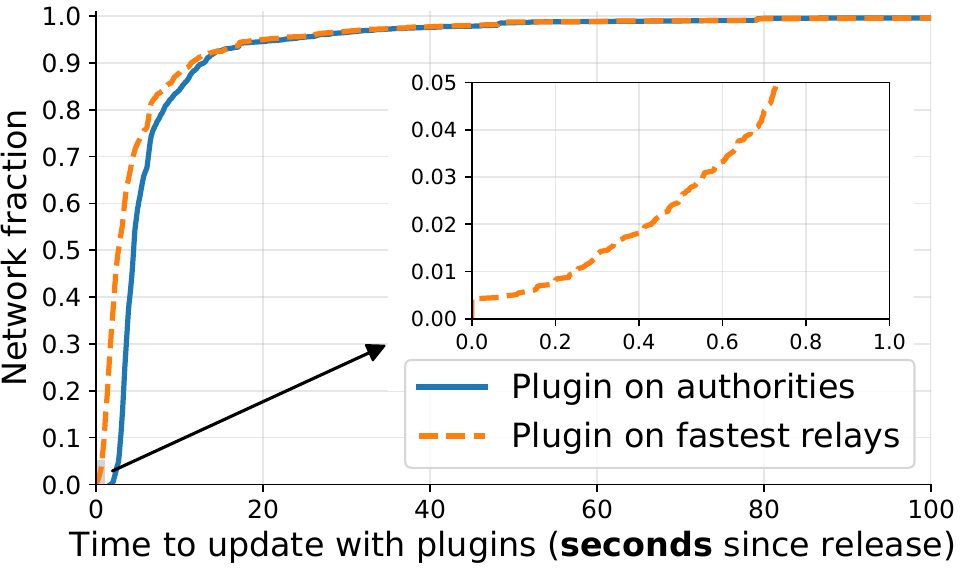}
   \end{minipage}
   \begin{minipage}{.49\linewidth}
     \includegraphics[width=\linewidth]{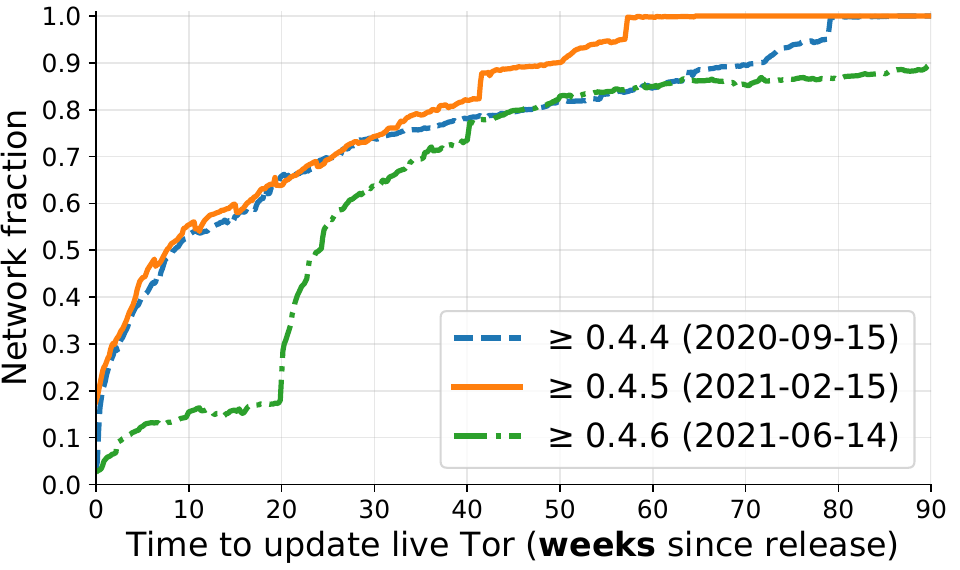}
   \end{minipage}
   \caption{(Left) Shows plugin propagation within a simulated network using
     the FAN architecture (automated and unattended ``push'' of new code). (Right) Shows Tor's situation based on historical
     metrics (a fraction of volunteers perform automated ``pull'' of new code). Note the units.}
   \label{fig:update-time}
\end{figure}

\subsection{Overhead Analysis}
\label{sec:overhead_analysis}

\subsubsection{Loading Plugins}

We evaluate how much CPU time it takes for typical plugins to be ready to
execute. It includes acquiring the bytecode, the JIT compilation CPU time, and
the plugin environment setup in our modified Tor version. Acquiring plugins
means loading them from disk. Additionally, to evaluate the plugin's authenticity, e.g. with Plugin 
Transparency, requires applying a hashing
function $H$ several times to verify the proof of availability received with
the plugin, which consumes on average a few $\mu s$ of CPU time.
A plugin may be composed of several
entry points, each contained in its own compiled bytecode file. Our framework loads
each of them independently and links them at the application-level context, such
that a plugin may be composed of multiple entry points and may hold a shared
sandboxed memory between each loaded entry point.

Table~\ref{table:overhead} gives an overview of different plugin costs. We
implemented several plugins to test the Tor FAN architecture,
including ``system-wide'' plugins that modify or replace existing
code, or connection-specific plugins that can be attached to a
given Tor circuit, and replace or modify a part of the code only
for that circuit. Section~\ref{sec:casestudy} explores these
capabilities in a case study.

From Table~\ref{table:overhead} (i.e., typical plugins are loaded in less than $1ms$), we can
conclude that dynamically loading new (uncached) protocol features is feasible and does not
incur significant latency, which we believe opens the door to different
governance policies. For example, dynamically sending a plugin within a Tor circuit
would be feasible. That is, upon circuit creation, a Tor client can now negotiate
features that relays have to support on this circuit. These features may be sent
and loaded as part of the circuit creation procedure without incurring
significant latency. 

\begin{table}
  \begin{scriptsize}
  \centering
\begin{tabular}{c c c c c}
  Plugin & \#LoC & \#hooks & Bytecode size & Loading Time  \\
  \hline
  \hline
  hello\_world & 10 & 1 & 1,048B & 204 $\mu$s \\
  sendme\_1 & 25 & 1 & 1,672B & 202 $\mu$s \\
  sendme\_2 & 67 & 2 & 3,840B & 385 $\mu$s \\
  sendme\_3 & 82 & 3 & 5,184B & 486 $\mu$s \\
  dropmark\_def & 652 & 4 & 13,488B & 664 $\mu$s \\
  dropmark\_def\_uncons & 675 & 5 & 14,616B & 729 $\mu$s \\
  dropmark\_def\_conn\_based & 322 & 4 & 11,736B & 493 $\mu$s \\
\end{tabular}
\end{scriptsize}
\caption{Load time displayed for \emph{uncached} plugins, and size refers to
  the transmission cost.}
\label{table:overhead}
\end{table}

\subsubsection{Fast-path Code Modification}

We now investigate the impact of running the plugin code with respect to the
legacy code within the original binary. Each code
path (sequence of instructions) does not have the same impact on the software, and
the overall Tor network. For example, we may differentiate the ``fast path'' from
a ``control path''. The fast path are the sequences of instructions 
typically executed when performing the most common task within the network. A
control path is the sequence of instructions performed over a specific and
potentially rare event.
In typical usage of relays, the fast path in Tor involves all instructions relaying
cells to the next hop. It is thus desirable that a FAN architecture supports plugins in
the fast path while maintaining high performance.

Figure~\ref{fig:overhead} shows the design of a private testbed to stress-test our FAN instance
implementation and measure the overhead of
several entry points hooked into the fast path of the exit relay. In this
experiment, 40 Tor clients send 20 MB of data on each circuit. Each circuit shares the
same exit relay.  We measure the throughput of the slowest stream in each
run of the experiment. Thus, $500$ iterations of the experiment are executed for each network 
configuration.

As a baseline, using vanilla (i.e. unaltered) Tor we craft the number of parallel streams in this stress-test 
network to push the
exit relay's process to a 100\% CPU consumption.
This method allows us to capture any overhead induced by our plugins as a
throughput reduction. The processor on the machine used for this evaluation is an
\texttt{AMD Ryzen 7 3700X 8-Core}.

Figure~\ref{fig:overhead} displays the results of three experiments comparing our FAN instance to
vanilla Tor. Each of the experiments made with our FAN instance involves a plugin modifying the Tor
flow-control algorithm with an increased amount of entry points hooked to the
fast path. Each entry point modifies an existing function of the flow-control
algorithm called in the fast path. That is, we test three plugins expected to be
increasingly more costly by requiring more entry points to be hooked in the
fast path. In each case, the plugin reproduces the behavior of the legacy code
as a baseline. The code being intercepted and replaced is executed
each time a data cell is processed on end points (clients and exit relays).

From our results, we make two observations. First, there is a small average
throughput reduction when the network uses plugins, as depicted in the three
experiments running the plugins. Second, increasing the amount of code plugin entry points
does not significantly further decrease the throughput.  The main culprit for
the throughput reduction is actually coming from the hook design implementation
within the main Tor binary.  When executing a plugin, we first have to locate
it in a hashtable using the hook's name. Hashing a string is a costly operation
on the fast path. Further engineering could bring more reduction to the overall
FAN implementation cost.  However, even without optimization these results are positive: they
show that our method to link, deploy and use JIT-compiled machine code from a
portable bytecode abstraction offers near-optimal performance over the fast
path in an anonymous network. This is true for code that does not compile to a
dedicated set of hardware instructions (e.g., AES-NI).

\begin{figure} 
	\centering
    \centering
     \begin{minipage}{.49\linewidth}
      \includesvg[inkscapelatex=false,width=\linewidth]{overheads}
      \end{minipage}
     \begin{minipage}{.49\linewidth}
    \includegraphics[width=\columnwidth]{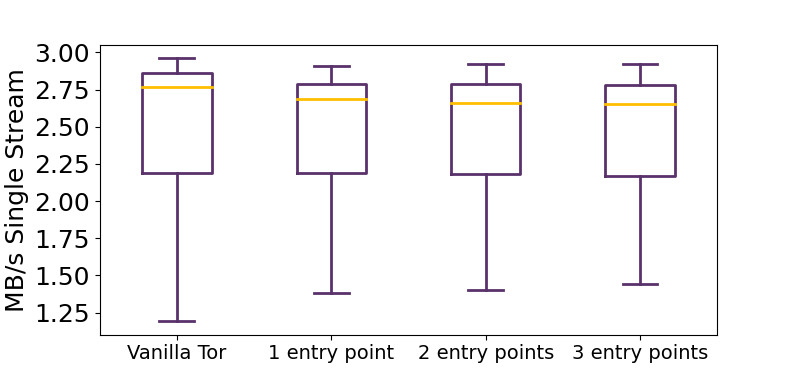}
  \end{minipage}%
  \caption{(left) Measurement setup for FAN overheads estimation over the fast path.
    Each client is configured to send a 20MB stream of data. (right)
    Throughput of the slowest Tor client within the measurements. Each
    boxplot covers 500 runs with a plugin being hooked at 1, 2 or 3 different
    places. We use plugins sendme\_1, sendme\_2 and sendme\_3 from
    Table~\ref{table:overhead}. }
  \label{fig:overhead} 
\end{figure}

\section{FAN Case Study: The Dropmark}
\label{sec:casestudy}

We investigate the use of our Tor FAN instance
(Section~\ref{sec:instance_architecture}) with a case study: defending 
against the Dropmark
attack~\cite{dropping-pets2018} to demonstrate FAN's novelty and its benefits.
The dropmark attack is a traffic confirmation technique exploiting the
Robustness issue discussed in Section~\ref{sec:case}, now considered publicly
as ``highly-severe'' by the Tor project in their recent attempt to categorize
attacks~\cite{prop344}.  We apply FAN's capability to deploy a defense and argue that
this defense is not possible with the current model and only effective over a FAN. We evaluate our
defense using Shadow~\cite{cset12-modeling, shadow-ndss12, rob-usenix2021}
simulations. We design the defense as two cooperative plugins, one plugged-in
``system-wide'' that extends the Tor protocol, and one which is
\textbf{\emph{connection-specific}}, i.e., plugged-in to extend the capability
of a given connection, or in this case, a Tor circuit.  Designing and deploying
this defense illustrates one of FAN's design goals: addressing the weaknesses
stemming from protocol extensibility techniques (i.e., Robustness principle and
Protocol negotiation) with a novel methodology.

\subsection{Case Study Background}

\subsubsection{Dropmark}

The Dropmark attack is an active and reliable traffic confirmation
technique~\cite{dropping-pets2018} having low false positive rate and high
success rate, which can be considered more robust than other existing correlation
strategies~\cite{ndss09-rainbow, ndss11-swirl,
  deepcorr-ccs2018, kohls2018digestor, deepcoffea} due to its unique properties.
In effect, the dropmark attack is a 1-bit communication channel
between a malicious exit relay and a passive observer watching Tor clients, such
as a malicious ISP or a malicious Guard relay. The attack exploits two key
characteristics of network protocols: 1) the existence of periods of silence in
which no data should be expected by a Tor client (which we also call ``protocol
silence''), and 2) the Robustness principle. That is, malicious
exit nodes can inject a message at the precise time of expected protocol
silence. The Tor client would then drop that message. This message does
not trigger circuit breakage due to protocol tolerance and would be observable
to any on-path passive observer. This contrasts to other traffic analysis
methods relying on full data traces and machine learning, which suffer from a number of
limitations~\cite{rimmer2022trace}.
In this case study, we explore how to reduce
the Dropmark attack's reliability efficiently thanks to a plugin-based protocol
extension within Tor's circuit padding framework.

\subsubsection{Circuit Padding}

Tor's
circuit padding~\cite{kadianakis2021pcp} framework is designed to enable Tor clients and relays to 
negotiate
and
``program''~\footnote{~It does it by offering a simple interface to specify
  machine states and transition events.} finite state machines, in which each state matches a padding
behavior according to a specified padding distribution (histograms or
parametrized continuous distributions). Pre-programmed events
within the padding machine framework can capture Tor related events, such
as a circuit opening, to make the machines change
their current state, hence changing their behavior.
Figure~\ref{fig:paddingmachine} shows a padding machine currently being used in
Tor with a state transition based on circuit events. Padding machines may be much more
complex in the future and the framework is designed to foster research and ease
of deployment for solutions tackling complex problems such as website
fingerprinting~\cite{pulls-padding} or traffic confirmation.

\begin{figure}[ht]
  \centering
\begin{tikzpicture}[thick, every node/.style={scale=0.8}]
  \node[state, initial] (q1) {$start$};
  \node[state, right of=q1, xshift=0.6cm] (q2) {setup};
  \node[state, accepting, right of=q2] (q3) {end};
  \node[below of=q1, yshift=1.4cm] (l1) {No Padding};
  \node[below of=q2, yshift=1.4cm] (l3) {Uniform in [0, 1] ms};
  \node[below of=q3, yshift=1.4cm] (l4) {No Padding};
  \draw (q1) edge[above] node {Non-Padding Sent} (q2)
  (q2) edge[bend left, above] node{Padding Sent} (q3)
  (q2) edge[bend right, below] node{State Length is 0} (q3);
\end{tikzpicture}
\caption{Used in Tor relays to make the cell pattern of a
rendezvous circuit construction similar to regular exit circuits.} 
\label{fig:paddingmachine}
\end{figure}

\subsection{The Dropmark Defense}
For this case study, we \textbf{\textit{dynamically extend}} the padding machine framework above 
realizing 
FAN's benefits by addressing one of Tor's
core weaknesses. 
To enable this dynamic extension, we placed generic hooks within Tor's Circuit Padding Protocol that are
designed to support and react to new internal events and new messages sent by
the Tor client. These hooks allow us to redefine the entire original padding
framework, including adding features unforeseen by Tor developers at the time of the initial design. Using 
these
capabilities, we design a \emph{protocol extension}
that supports a padding machine able to mitigate the Dropmark attack. We argue
that while this protocol extension---essentially a conservative and fixed rule---is a necessary condition to 
efficiently address
the Dropmark, it has the potential to negatively impact and conflict with future changes. However, by 
design, FAN 
affords re-programming, both the above and future extensions, thus avoiding these kinds of forward 
compatibility issues. 

\subsubsection{Functional Overview}

Figure~\ref{fig:dropmarkdef} shows the FAN Dropmark defense padding machine
enabled from the two protocol extension plugins ``dropmark\_def'' and ``dropmark\_def\_conn\_based'' from
Table~\ref{table:overhead}, which we deployed and tested in faithful networking
conditions in a Shadow Tor network.
The padding machine supports several events that required an upgrade of the
circuit padding framework. This upgrade is brought through a first plugin
(``dropmark\_def'') to support
sending an application-level event from the Tor client to cause the relay to
transition its padding state to begin padding in the precise moment an
inbound period of silence is expected (known and announced by the client
through the protocol extension). Also, as part of the defense, the plugin
carries another extension enabling a \emph{conservative protocol policy} that prevents the
middle relay (where the machine is plugged-in) to forward any message towards the
client if the circuit is clean (i.e., when no streams have been attached yet),
which defeats any Dropmark attempt before the client decides to actually use
the circuit and attach a stream.  Deploying such an extension on today's Tor is
arguably impossible, as it would hinder the extensibility of the Tor routing
protocol with today's methods (i.e., one does not want to deploy any feature
that could create friction and incompatibilities with any future requirements).
However, with FAN, the protocol can be arbitrarily restrictive since we can
globally re-write any current restrictions to accommodate future
requirements.

On top of the
new protocol events brought by the plugin ``dropmark\_def'', the plugin
``dropmark\_def\_conn\_based'' offers the ability for a given circuit on which
the plugin is plugged-into to make the middle node swap padding cells sent by the
new dropmark machine with regular cells coming from the exit relay. The idea is
to absorb, at the middle node, \textbf{\emph{any}} pattern sent by the exit relay during a
period of silence, and output it towards the guard with the exact same
distribution as the live padding machine. This approach guarantees
thwarting any dropmark attempt while the plugin is active: no matter if the exit is
sending a signal or not, the middle node would output cells according to its state
machine. This plugin performs this task by taking ownership of the circuit
cells transiting through the node, and maintain them within its own forwarding queue. The
cost of this design is extremely light on the network
and have no impact on the user experience, since all of this happens within the
bounds of a protocol silence (i.e., there is no legitimate user activity expected),
as our experiments and analysis show.

This plugin is set up and activated
alongside the \textcolor{red}{Activate} signal from Fig.~\ref{fig:dropmarkdef},
and cleaned from the circuit upon a \textcolor{red}{Be Silent} event which
instructs the padding machine to stop sending padding cells or delaying regular
exit cells to match the padding
distribution.

\begin{figure}[ht]
  \centering
  \begin{tikzpicture} [thick, every node/.style={scale=0.8}]
    \node[state, initial] (q1) {$start$};
    \node[state, right of=q1] (q2) {burst};
    \node[state, right of=q2] (q3) {gap};
    \node[below of=q3, yshift=1.5cm] (l3) {Uniform in [1, gap] ms};
    \node[state, above of=q2] (q4) {silence};
    \node[state, accepting, above of=q3] (q5) {end};
    \draw (q1) edge[below] node{\textcolor{red}{Activate}} (q2)
          (q2) edge[bend left, above] node[xshift=0.6cm, yshift=-0.1cm]{State Length is 0} (q3)
          (q2) edge[bend left] node[rotate=90, yshift=0.2cm]{\textcolor{red}{Be Silent}} (q4)
          (q4) edge[bend left] node[rotate=-90, yshift=0.2cm]{\textcolor{red}{Activate}} (q2)
          (q3) edge[bend left, below] node{Padding Sent} (q2)
          (q4) edge[above] node{Circuit Close} (q5)
          (q3) edge[bend right] node[rotate=-90, yshift=0.2cm]{Circuit Close} (q5)
          (q2) edge[above] node[rotate=45, xshift=0.4cm]{Circuit Close} (q5);
  \end{tikzpicture}
  \caption{Dropmark Defense padding machine negotiated on the middle relay when
    the circuit opens. Events
    ``\textcolor{red}{Activate}'' and ``\textcolor{red}{Be Silent}'' are
    client-triggered events fired through a protocol extension enabled with FAN. State Length is a counter decreasing at each padding cell sent.}
  \label{fig:dropmarkdef}
\end{figure}

In summary, the Dropmark defense padding machine is designed to produce a
Dropmark and absorb any real attacker's attempt. That is, an adversary
observing the network either on the Guard relay or on the wire should observe
the \textbf{\emph{same}} watermark on every Tor circuit regardless of
\textbf{\emph{adaptive}} malicious exit relays injecting traffic on the other end of the
circuit, rendering this traffic confirmation technique unreliable, since the
presence or absence of an adversarial exit exploiting Tor's protocol robustness
features is independent of the observed behavior.

\subsection{Dropmark Defense Analysis}

\subsubsection{Dropmark Attack Replication Results}
We re-implemented the Dropmark attack~\cite{code-dropmark} on a recent Tor version (Tor-0.4.5.7). 
With this updated code, we reproduced the
original results using Shadow 3.0 with a user simulation model
based on real traces collected with PrivCount~\cite{privcount-ccs2016, elahi2014privex}.
Table~\ref{table:dropmark} shows the attack accuracy computed from correct
(True Pos., True Neg.) and
incorrect (False Pos., False Neg.) classifications. 
    The experiment contains $24,800$ simulated Tor users and 202
    Tor relays. All combined, those users connect more than $640,000$ times during
    the experiment to simulated Top Alexa websites.
These results confirm the
original high accuracy of the Dropmark attack over a more faithful network
simulation, thanks to the recent progress in Shadow's simulation
environment~\cite{rob-usenix2021}.

\begin{table}
  \begin{scriptsize}
    \centering
    \begin{tabular}{c c c c c c}
      Attack & TPos & FPos & TNeg & FNeg & Accuracy \\
      \hline
      \hline
      Dropmark & 0.999972 & 0.007713 & 0.99229 &  0.000028 & 0.9961\\
    \end{tabular}
  \end{scriptsize}
\caption{Dropmark attack Shadow Simulation results}
 \label{table:dropmark}
\end{table}

\subsubsection{Empirical Dropmark Defense Results}

\begin{figure}
  \centering
    \includegraphics[width=0.7\linewidth]{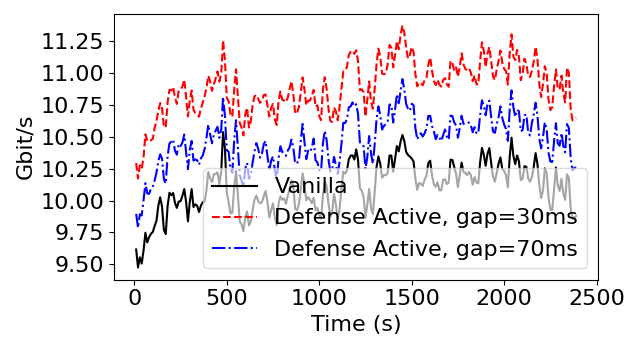}
  \caption{20-seconds moving average of total relay goodput on a 5\% scaled
    down Tor network from April 2021 (340 relays).
    The dropmark defense experiments generate padding in more than 400,000 circuits.}
  \label{fig:dropmark_overhead} 
\end{figure}

We deployed our Dropmark defense machine in a Shadow Tor network
while running the Dropmark detection code on attacker guard relays. The defense is
efficient:
with no attack, the malicious guard
relays incorrectly (i.e. false positives) flagged 99.9975\% of the active circuits which connected to the
simulated Alexa Top list.  

Figure~\ref{fig:dropmark_overhead} shows the overall bandwidth overhead induced
by the Dropmark defense during the simulation. The overhead can be tuned from
the interval length that defines how fast the Dropmark padding machine juggles
between internal states. The intervals [1ms, 30ms] and
[1ms, 70ms] are evaluated, and show a clear connection to the overhead. A
larger interval reduces the bandwidth pressure over the network but could
further delay the CONNECTED cell sent by the exit. Perfect superposition of the
curves comes from Shadow's simulation determinism and from our design
efficiency. Indeed, the graphs would superpose under the following conditions:
\begin{inparaenum} 
  \item Circuit choice is deterministic, leading to the same circuits in the
    different simulations (necessary for a meaningful impact comparison).
  \item The Dropmark defense does not impact user traffic (indeed, the
    padding is sent when the circuit is silent).
  \item There is a surplus of bandwidth in the client-to-middle part of Tor
    circuits due to Tor topologies (allows to absorb the padding overhead).
    Tor has a significant bandwidth excess in these parts of circuits
    for about 10 years now~\cite{tor-metrics, rochet2017waterfilling}.
\end{inparaenum}

During $1$ hour of virtual time,
the $37,617$ emulated Tor clients altogether generated $74,444$ active circuits
every 10 minutes in aggregate. Figure~\ref{fig:dropmark_res} shows that, on average,
with $\textbf{gap}=70ms$ the
middle relays sent $217$ padding cells in these circuits to cover potential
Dropmark attempts,
totaling $\approx 98$ million cells ($\approx 47 GiB$)  during the virtual hour over
the whole network. While these numbers may look large, they represent a
small overhead of $3.6$\% for the overall network goodput (throughput of RELAY command cells).
Moreover, thanks to the deployed protocol restriction,
the defense is only required to cover on average $390$ms of expected protocol
silence (Figure~\ref{fig:dropmark_res}). Varying the defense parameters
allows full control over the overhead. Note that, because recent Tor
topologies have a scarce total exit capacity~\cite{claps-ccs2020}, slightly
increasing the bandwidth usage in the entry and middle parts of Tor circuits is
not expected to impact the clients' goodput with is bottle-necked by exit bandwidth.

\begin{figure}
    \centering
   \begin{minipage}{.49\linewidth}
     \includegraphics[width=\linewidth]{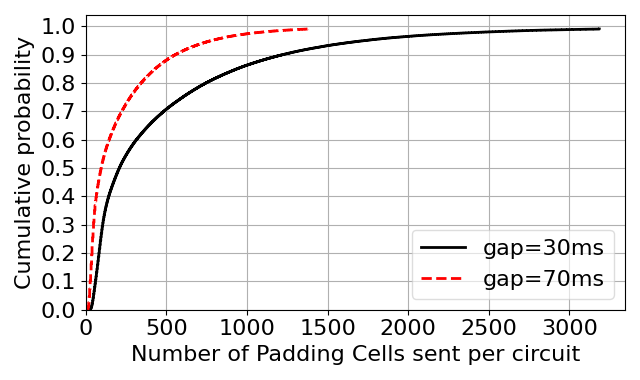}
   \end{minipage}
   \begin{minipage}{.49\linewidth}
     \includegraphics[width=\linewidth]{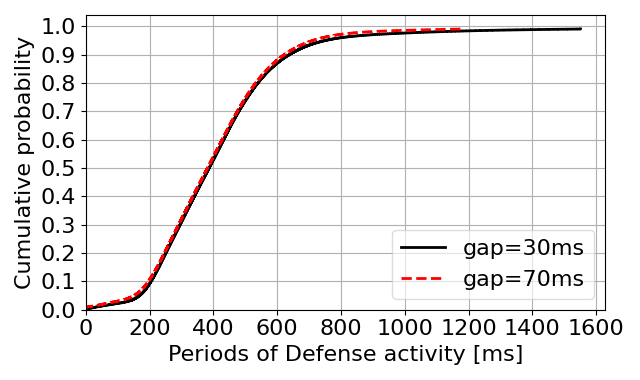}
   \end{minipage}
   \caption{Per-circuit statistics of the dropmark defense deployed on our FAN
     instance, for interval choices [1ms, 30ms] and [1ms, 70ms] in the padding machine
     design. Plotted until percentile .99} 
   \label{fig:dropmark_res}
\end{figure}

\subsubsection{Bayesian Detection Rate}

We assume, per the Dropmark's threat model, that the adversary sits between some fraction of the
clients and the Internet, controlling some guards or ISPs, and listening for Dropmark events.
The challenge for the adversary conducting a traffic confirmation attack is to reliably distinguish whether 
or not an observed Dropmark signal was sent by it.
We
can compute the reliability
using the \textit{Bayesian detection rate} relative to how much bandwidth the
adversary controls at the exit position. We can calculate the Bayesian detection
rate using Bayes' Theorem:
Let $D$ be a random variable that denotes the observation of a Dropmark over the
circuit ($\neg D$ otherwise). Let $F$ be a random variable that denotes the
fraction of exit bandwidth the adversary controls ($\neg F$ the fraction
not controlled). We are interested in
$P(F|D)$, that is, the probability that the observed Dropmark signal actually
comes from the adversary:

$$ P(F|D) = \frac{P(F) \times P(D|F)}{P(F)\times P(D|F) + P(\neg F)\times
  P(D|\neg F)}$$

From our Dropmark empirical results in Table~\ref{table:dropmark} and from the empirical
results of the Dropmark defense obtained with Shadow, we can compute $P(F|D)$ relative to
$F$. Indeed, $P(D|F)=TPos=0.999972$ and $P(D|\neg F)=FPos=0.007713$. Using the
defense, $P(D|\neg F)$ rises to $0.999975$. Figure~\ref{fig:dropmark_bayes}
gives the overall picture and highlights the impact of the \emph{undefended}
Dropmark's low FPR: the attack's reliability rises as the
compromised exit fraction increases. Several facts make these
results concerning: 1) In its recent history, the Tor network has several times experienced a malicious 
operator controlling a significant fraction ($> 20\%$) of the
total exit bandwidth~\cite{nusenu-report}, and 2) several exit families hold a
large fraction of the total exit
bandwidth~\cite{nusenu-ornetstats}. Independent of their
trustworthiness, being in a position that allows reliable
Dropmark-based deanonymization at scale might be undesired until
the problem is solved.

\begin{figure}
  \centering
    \includegraphics[width=0.6\linewidth]{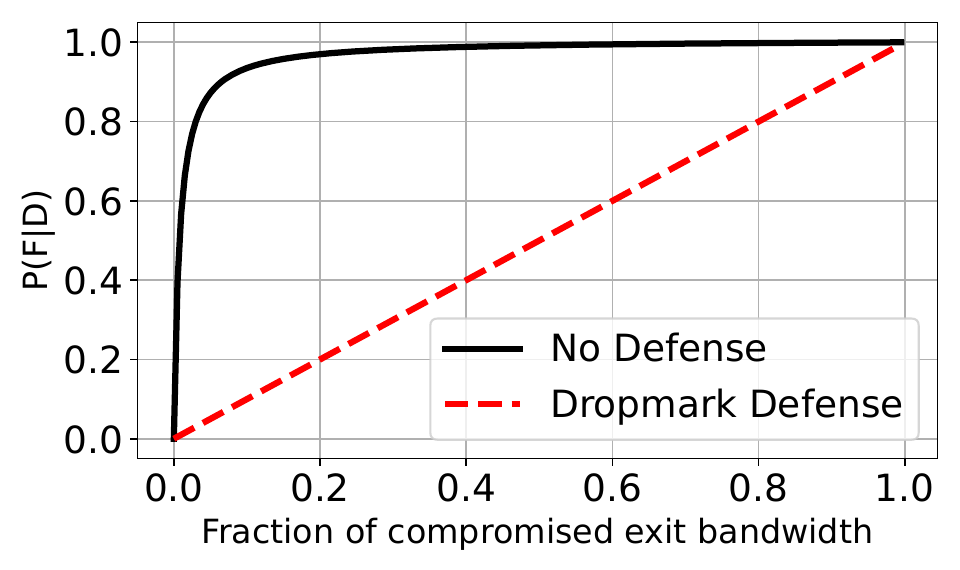}
    \caption{Bayesian Detection Rate as the dropmark's reliability to large
      scale deanonymization (needs to be close to 1 to avoid flagging
      ``innocents'').}
    \label{fig:dropmark_bayes}
\end{figure}

The FAN-based defense is efficient in breaking the Dropmark's reliability, enforcing a
given fingerprint that is independent of the presence of some adversary on the exit
node with a potential adaptive Dropmark strategy.

\subsection{Case Study Conclusion}

This case study illustrates that deploying
ephemeral conservative protocol policies can be useful to defend against traffic
confirmation techniques, reducing the overhead of padding-based defenses. We
argue that such policies would be infeasible to deploy over vanilla Tor due to the impact on
 extensibility, and in case of problems, require choosing between two undesirable options, such as either 
 excluding relays that do not update fast enough or
wait long enough for the restrictive protocol feature to be removed through Tor
volunteers' regular update lifecycle. 

\section{Related Work}
\label{sec:related}

Bento~\cite{bento-sigcomm} and Proteus~\cite{wails-proteus-foci23} are the
closest work to the FAN approach. Like FAN,
Bento is fundamentally an architecture. Bento provides mechanisms for
installing user-defined functions on Tor nodes using a ``middlebox'' 
approach,
assuming a perfectly secure enclave to run user-functions in. Bento pays a high
performance price being designed as a separate process running a Python 
controller
interacting with the Tor process through a socket to exchange information (e.g., Tor cells).
While the two designs are close to each other in spirit, the architectural
differences lead to radical differences in terms of performance,
security, capability, and deployability.  FAN mainly discusses and addresses 
the
need for better extensibility from the authority' perspective, while Bento
argues for benefits of a client-controlled environment remotely running
arbitrary code in secure enclaves on relays.

Proteus~\cite{wails-proteus-foci23} provides a protobuf-like language to
remotely run user-defined protocols on bridge relays. Proteus shares similar
observations and benefits as FAN and Bento, and like Bento, its different
approach makes it an interesting and independent research direction to pursue.

Rochet and Pereira~\cite{dropping-pets2018}
established that protocol flexibility could be leveraged to perform traffic correlation.
We build upon their work and show how a software architecture change can allow us more control
while removing protocol tolerance, enabling a realistic solution against the
Dropmark attack, and open new research perspectives. The Vanguards
addon~\cite{vanguards-announce} produced by the Tor project reacted to the
research in~\cite{dropping-pets2018} by enabling restrictive protocol policies,
tearing down any circuit receiving an unknown cell. We argue that this choice would
force the Tor project to negotiate any new
feature and wait a long time to deploy them (until most of the relays
are upgraded) or force periodic global exclusions of relays, which would damage the
network diversity and throughput.

\section{Limitations}
\label{sec:limitations}

\textit{Technical limitations of current implementation}
From a software security standpoint, JIT engines are currently behind regular compilers in
regard to memory safety mitigations such as Control Flow Integrity (CFI) or
the No Execute bit (NX) for the stored code. These are mitigations to memory
safety bug exploitations. Nonetheless, the steady progress of JIT engines, and the future of
architectures building upon them is promising, both from a performance and security
point of view. Indeed, the JIT
compilation engine can detect the processor on which it compiles some eBPF or
WebAssembly code, and uses its dedicated instructions. It contrasts compiling ahead-of-time
as we do today since distributing packaged binaries requires compiling the code for a generic subset of instructions
to make sure a variety of, say, x86 processors, would understand all
instructions. Usually it means advanced, faster capabilities are unused. From a
security point of view, our design could be extended to
run untrusted code. However, the existence of a vulnerability in the untrusted code does not imply
the existence of an exploitation vector. Technologies such as WebAssembly are
designed to thwart exploitation vectors, making vulnerabilities, if any,
difficult to impossible to exploit compared to a memory safety vulnerability
within a regular binary~\cite{10.1145/3609021.3609306}.

\textit{Fast path overheads}
Currently, accessing host heap-allocated data by a plugin may involve a copy to move the
data into the plugin's memory authorized space, while regular code would pass a
pointer. As an alternative, if some data is meant to eventually be moved into a plugin,
the host could allocate memory within the plugin's memory space since this space
is owned by the host's process which would involve passing a pointer rather than a copy. This would 
require writing and using a wrapper for the OS's memory allocation syscalls for the host.

\textit{Case study's defense limitations}
A malicious client may employ a congestion attack~\cite{sniper14,pointbreak-sec2019}, to build many 
circuits through the same guard relay (but different
middle relays), activate the Dropmark defense and stop reading the client-guard
TCP connection. 
We can deploy several mitigations to avert such an issue. For example, from
Figure~\ref{fig:dropmark_res}, capping the number of padding cells produced by
the dropmark defense to 350  would be effective for $\approx$99\% of Tor circuits.
A legitimate client can switch to a new circuit when it receives this amount of padding cells since the 
circuit is unusable due to the latency or
potentially being flooded by a malicious exit. This strategy could be programmed
within the plugin and instantly deployed, as a further example of what the FAN
architecture enables.

\section{Future work}

FAN could be used to run different protocols atop the same physical network,
multiplexed over the same TCP connection between two relays and challenging
the adversary's assumption about the nature of the traffic. For example,
high-latency, round-based provably secure anonymity could technically co-exist
with a low-latency network on FAN and hide among the crowd of low-latency
users, atop the same TCP connection which the network engine would allow to
share among plugins.

While we discuss a centralized governance model in this paper (i.e., no
third-party plugins, all relays should have the same set of plugins for a given
core version), research in alternative governance models such as enabling
third-parties raises safety \& security concerns that would need addressing.  One major difference
would be that users could enable/propagate in their own circuits a different set of third-party but 
authorized protocol features. While these third-parties may lead more users to embrace and adapt the 
anonymous communication
network to their use case, it opens the door for fingerprinting attacks.
Moreover, the system would have to make sure that different plugins cannot
misbehave together (collude), or monopolize resources. Our current proposal does not
enable third-parties to propagate code, yet this long-term research goal (hence
``Towards'' in the title) motivates FAN's current complexity. Efficient,
low-bandwidth automated and unattended updates as suggested in this paper are
pre-requisite of these capabilities.

\section{Discussion \& Conclusion}
\label{sec:disc_ccl}

In this work, we
argue that
the freedom allowed by the developers in their protocol designs required to 
ensure
the continuous evolution of the network and minimize the burden to roll out 
changes is
also the source of the adversary's strength. 
In Section~\ref{sec:case}, we gave an appealing example
regarding the circuit extension protocol, in which the protocol's tolerance
enabled a chain of exploits. It has led to the most reliable
traffic confirmation technique to date, and put in use by a state agency. We
also argue that the exploitation of  protocol tolerance is the root cause of 
several other reliable attacks against the Tor
design  (DoS using tolerance in the control-flow~\cite{sniper14}, dropmark 
due to
Robustness in the Relay
protocol~\cite{dropping-pets2018}, arbitrary positioning allowed in the HSDir
list~\cite{oakland2013-trawling}, etc.).

Programming the network with a FAN methodology may help prevent those protocol
flaws from existing in the first place. Indeed, with a FAN architecture, we can
naturally tighten the tolerance embedded in feature designs: there are no
forward/backward compatibility concerns. Developers need not be constrained by future, 
as-yet-unknown requirements when programming with FAN. 
Instead, the developers may be as conservative as possible and deploy
new iterations of functionalities with plugins, programming new behavior
as it is needed. FAN applied to Tor has the potential to reduce its protocols'
attack surface, assuming Tor's governance model does not change (e.g., no
third-party plugins).

\section*{Acknowledgements}
This research was partially funded by the CyberExcellence project of the Public Service of Wallonia (SPW Recherche),
convention No.~2110186, and by REPHRAIN under UKRI
grant: EP/V011189/1.

\bibliographystyle{ACM-Reference-Format}
\bibliography{paper}

\appendix
\section{FAN Plugin Transparency Details and Proofs}
\label{app:sec_analysis}

\paragraph*{FAN Plugin Transparency Details \& Proof Sketches}
Newly issued plugins are stored in the FTLs' structured Tree as leaf nodes.
The leaf depends on the plugin's name, which is a concatenation of the FAN
developer's namespace--- 
registered at
the FTL---and the plugin's unique name (e.g.,
fan.project/my\_plugin, where the registered namespace 
``fan.project'' is concatenated with `/' to the plugin's name 
``my\_plugin''). 
This name is input into the hash function $H$ and the output is truncated to the
Tree's depth $D$, resulting in a 
binary representation of the location of the leaf node. For example, in 
Figure~\ref{fig:auth_path}, with $H=sha256$ and $D=3$, the plugin above 
would be stored at leaf $001$. A `0' is a 
left child and `1' is a right child from the Tree root.
Where there are collisions, plugins are stored in an alphabetically-ordered
list  used to calculate the leaf value as follows:

\begin{multline*}
$$leaf=H(name_1||H(plugin_1||meta-info_1);...;\\name_n||H(plugin_n||meta-info_n))$$.
\end{multline*}

Given this, the root of the FTL Tree can be computed, successively concatenating
children and hashing the concatenation, which then enables 
the ability to verify the \textit{Proofs of 
Availability and Absence} with authentication paths (see below).

This Tree offers:
\begin{inparaenum}
\item \textbf{Efficiency.} The proof of availability and the proof of absence (detailed
  below) can be
  computed with space and time complexity in $\Theta(D + \frac{N}{2^D})$ with $N$ the number of
  plugins included and $D$ the Tree depth assuming $H$ emulates a Random Oracle. In a classic Merkle Tree, the
	proof of absence would require monitoring the whole tree. Observe that
	depending on $N$, the FTLs can choose $D$ such that the proofs'
	complexity is close to the lower bound $\Omega(D)$ on average. $D$ must, however, be the same for all FTLs for the next property to hold.
\item \textbf{Unambiguous name-to-plugin mapping \& spurious plugin detection.} A 
  plugin name maps to a single leaf. The FAN developers can detect if 
the FTL maliciously swaps the 
plugin for another from the proof of availability requests to
	each FTL in each epoch. The relays would also detect it from a
	non-matching signed tree root value in the proof of availability (the
value would not be the same as the one broadcasted by the FAN developers to
the network for the current epoch).  
\end{inparaenum}

\begin{figure}
  \centering
  \begin{small}
\begin{tikzpicture}[level/.style={sibling distance=42mm/#1}, level
	distance=6mm] 
	\node (z){$Root = H(h_0 || h_1)$} child
  { node (a) {} child {node
      {} child {node [circle, draw, color=black,fill=green] (b) {} 
    edge from parent [color=black] node[above] {$0$} }
  child {node [draw, circle, inner sep=1.7pt] (d) {$list$} 
    edge from parent [color=red]
	node[above] {$1$}
      }
      edge from parent [color=red]
      node[above] {$0$}
    }
    child {node [circle, draw, color=black, fill=green] {}
      child {node {$leaf$}
         edge from parent
         node[above] {$0$}
      }
      child {node {$leaf$}
        edge from parent
        node[above] {$1$}
      }
      edge from parent [color=black]
      node[above] {$1$}
    }
    edge from parent [color=red]
    node[above] {$0$}
  }
  child {node [circle, draw, color=black, fill=green] {}
    child {node (g){}
      child {node  {$leaf$}
	 edge from parent [color=black]
         node[above] {$0$}
      }
      child {node {$leaf$}
    edge from parent
        node[above] {$1$}
      }
      edge from parent
      node[above] {$0$}
    }
    child {node {}
      child {node {$leaf$}
 	edge from parent
         node[above] {$0$}
      }
      child {node {$leaf$}
	edge from parent [color=black]
        node[above] {$1$}
      }
	edge from parent [color=black]
	node[above] {$1$}
    }
    edge from parent
    node[above] {$1$}
  };
\end{tikzpicture}
\end{small}
\caption{FTL's Name-Structured Merkle Tree List. Plugins are assigned to a leaf
  depending on their name. If multiple plugins are assigned to the same leaf,
  they are arranged in an alphabetically-ordered list by name. The green dots
  are the value provided for the Authentication Path for
  $H(``fan.project/my\_plugin") = 001$\dots and needed to re-compute the root.} 

	\label{fig:auth_path}
\end{figure}

The proof of availability and the proof of absence can both be computed from an
\emph{authentication path} (see an example in Figure~\ref{fig:auth_path}). That is, for a given name, the FTL produces the 
leaf's list and the list of siblings' hashes on the path from the leaf to 
the root. Knowing the expected path in the 
Tree from the name, the verifier can
unambiguously concatenate the siblings and reconstruct the full path
using the list of hash values for a given plugin. 

Then, having the authentication path, the verifier can
re-compute the Root hash value and compare it to the expected one. 
The verifier can infer directly whether the parent's hash value
needs to be computed as $H(path\_hash||sibling\_hash)$ or
$H(sibling\_hash||path\_hash)$ from the plugin's name, since its truncated 
hash value gives the ordering (i.e., the path direction in the Tree).

Relays will load a plugin if and only if:
\begin{inparaenum}
\item The FTL is ``online''. 
\item The plugin push epoch value is lower or equal to the current
	epoch value.
	\item The proof of availability correctly verifies.
	\item The plugin is not marked as ``withdrawn''.
	\item The broadcasted FTL's root hash matches the proof of 
	availability's reconstructed root hash.
\end{inparaenum}

Observe that relays only need to verify the FAN developers' threshold
signature in the event of a dispute. Furthermore, the proof of availability
can be attached to the plugin when it is transmitted to the network. When 
all verifications succeed, the cost of the overall system at runtime
for the non-protesting relays is minimal: no interactions and a few
Hash calculations to verify the proof of availability.

When relay operators protest
against an issued plugin by the protest epoch, they sign
``\$relay\_identity:protest:\$plugin\_name'' and send the signature to at 
least
one FTL, which will broadcast the signature to other
FTLs. Weighing the number of protests, the FAN developers may decide 
to withdraw the
plugin from the network. Once withdrawn, relay operators can request a
proof of absence from all ``online'' FTLs. If the relay operator maliciously
sends multiple 
signatures
(i.e., different ones for the same protest), the FTLs would forward them to the
FAN developers for potential punishment and ignore the protest.

We have the following theorems:

\begin{theorem}
	Under the assumption that $H:\{0, 1 \}^* \rightarrow \{0, 1\}^n$
  behaves as a random oracle, a name maps to a unique authentication path in
the Name-Structured Merkle Tree List with probability $1-\negl$.
	\label{theo:auth_path}
\end{theorem}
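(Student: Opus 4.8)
The plan is to read the statement as a \emph{binding} property of the Name-Structured Merkle Tree List: for any fixed name $m$, the data constituting its authentication path---the leaf's alphabetically-ordered list together with the $D$ sibling hashes from the leaf up to the root---is uniquely pinned down by the committed root, so that no second, distinct authentication path for the same name can verify against that root, except with probability negligible in $n$. First I would record the one genuinely deterministic fact: since $H$ is a function, $m$ has a single image $H(m)$, and truncating it to its first $D$ bits fixes exactly one leaf location and hence one root-to-leaf direction sequence. So the path \emph{direction} is never ambiguous; all of the uniqueness content lives in the hash values hung on that fixed skeleton.

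Next I would set up the reduction from non-uniqueness to a collision of $H$. Suppose, for contradiction, that for name $m$ there exist two distinct authentication paths that both recompute to the committed root $R$. Because the sequence of left/right steps is identical for both (it is fixed by $H(m)$), I can compare the two recomputations node by node, starting at the root. At the root both evaluate $H(\cdot\|\cdot)$ to the same $R$; if their inputs differ I have exhibited two distinct preimages of $R$, i.e. a collision. If the inputs agree I descend one level along the fixed direction and repeat. Since the paths are distinct they must disagree somewhere, so the descent must terminate at a level where equal outputs arise from unequal inputs, yielding a collision in $H$.

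The one non-standard level is the leaf itself, where the value is $H(name_1\|H(plugin_1\|\text{meta}_1);\dots;name_n\|H(plugin_n\|\text{meta}_n))$ rather than a plain two-child concatenation. Here I would exploit the fact that the list is \emph{alphabetically ordered} to make its serialization canonical: each list maps to a single well-defined byte string, so if two distinct leaf lists hash to the same leaf value we again obtain a bona fide collision of $H$ (and, one level further in, of the inner $H(plugin\|\text{meta})$ commitments). Collecting the cases, any violation of uniqueness yields a collision of $H$.

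Finally I would bound the collision probability in the random-oracle model: the construction (or an adversary) issuing at most polynomially many queries $q$ to $H:\{0,1\}^*\to\{0,1\}^n$ finds a collision with probability at most $\binom{q}{2}2^{-n}$, which is $\negl$; hence a name maps to a unique verifying authentication path with probability $1-\negl$. The step I expect to be the main obstacle is the leaf level: unlike a textbook Merkle tree, collisions of the truncated location hash are \emph{expected}---they are exactly what the per-leaf list is designed to absorb---so I must argue uniqueness \emph{per name within a shared leaf} through the canonical ordering, rather than claiming that distinct names never share a leaf, which is false for the chosen regime $D \ll n$.
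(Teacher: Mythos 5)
Your proposal is correct, but it proves the theorem by a genuinely different route than the paper. The paper formalizes ``unique authentication path'' as an adversarial game: an adversary $\adv$ crafts $plugin^{\adv}$ and $meta\textrm{-}info^{\adv}$ so as to duplicate the path of an honest name, and succeeds only by producing an \emph{order-swap} collision, i.e.\ $H(l\|l^{\adv}) = H(l^{\adv}\|l)$ at the leaf's parent or $H(h_0\|h_1)=H(h_1\|h_0)$ at one of the remaining $D-1$ levels; each of the $q$ random-oracle queries succeeds with probability $\frac{D}{2^n}$, and a union bound gives $\frac{qD}{2^n}=\negl$. You instead prove a root-binding property: fixing the direction sequence via the truncated $H(name)$, any two distinct paths that verify against the same committed root yield, by top-down comparison, equal hash outputs on unequal inputs, i.e.\ a collision, which a $q$-query adversary finds with probability at most $\binom{q}{2}2^{-n}$. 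Your formulation is strictly more general in one respect: it rules out \emph{any} forged path data, including forged sibling hashes, whereas the paper's game only considers manipulation of the plugin and meta-info; it also matches what relays actually check (recomputing and comparing the root), and it explicitly handles the leaf-list subtlety---that truncated-hash collisions between names are expected and absorbed by the alphabetically-ordered list---which the paper's proof passes over silently. What the paper's framing buys in exchange is a tighter bound (linear in $q$ rather than quadratic) and a direct fit with how Theorem~2 uses this result, where the adversary is precisely a malicious FTL tampering with leaf contents; your birthday bound is looser but still $\negl$, so the theorem's conclusion is unaffected.
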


\begin{proof}
	
	Let $H = RO : \{0, 1\}^* \rightarrow \{0, 1\}^n$ a random oracle. Let
	$D$ be the depth of the Tree.
	Let
	$\adv$ be an adversary able to manipulate the plugin and the meta-info to
	duplicate an authentication path for the name $name$ assigned to leaf
	$l$:

	$$ l^{\adv} = name||H(plugin^{\adv}||meta-info^{\adv}) $$

	The adversary succeeds if they can choose $plugin^{\adv}$ and
	$meta-info^{\adv}$ such that the parent's value is:
	
		$$H(l||l^{\adv}) = H(l^{\adv}||l)$$
	
	or for any of the $D-1$ parents we have (with $h_0$ and $h_1$ the child
	values of the current level):

		$$H(h_0||h_1) = H(h_1||h_0)$$

	$\adv$ performs $q$ queries to the Random Oracle, each of
	probability $\frac{D}{2^n}$ to succeed. By
	the union bound, the adversary succeeds with probability $\frac{q\times D}{2^n}
	= \negl$. Therefore, the authentication path is unique with probability $1-\negl$.

\end{proof}

\begin{theorem}
	Under the assumption that $H:\{0, 1 \}^* \rightarrow \{0, 1\}^n$ is a
  uniform random function, and given that the authentication path is unique
  (Theorem~\ref{theo:auth_path}), the FAN developers can detect spurious plugins
  with probability $1-\negl$ with algorithmic complexity $\Theta(D +
  \frac{N}{2^D}$) for $D$ the depth of the Name Structured Merkle Tree List and
  $N$ the number of plugins included.
\label{theo:spurious_plug}
\end{theorem}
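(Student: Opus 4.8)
The plan is to treat the two assertions of the theorem separately: the complexity bound and the detection probability, leaning on the random-function assumption for the former and on a reduction to collision resistance of $H$ for the latter.

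For the complexity claim I would first exploit the random-function assumption to control leaf occupancy. Truncating $H(name)$ to $D$ bits places each name in one of the $2^D$ leaves, and because $H$ is uniform this assignment is uniform over the leaves; hence the expected length of the alphabetically-ordered list stored at any leaf is $N/2^D$. Detecting a spurious plugin under a given name then reduces to verifying that name's proof of availability: one recomputes the leaf hash from its list, at expected cost $\Theta(N/2^D)$, and folds in the $D$ sibling hashes of the authentication path to rebuild the root, at cost $\Theta(D)$. Comparing the reconstructed root to the epoch's signed root is $O(1)$, so the total is $\Theta(D + N/2^D)$, matching the efficiency figure already stated for proofs of availability.

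For the detection probability I would argue by reduction to collision resistance of $H$. The FAN developers know precisely which plugins they issued, hence the exact list every leaf in their namespace must contain and the exact value the root must take; call a plugin \emph{spurious} if it appears in the namespace but was never issued. Suppose an FTL injects such a plugin while serving a signed root that honestly commits to it. To escape detection the FTL must answer the developers' proof-of-availability queries with leaf lists that omit the spurious entry yet still hash, through the $D$-step path, to that signed root. By Theorem~\ref{theo:auth_path} the authentication path for each queried name is unique, so the FTL cannot reshape the tree to route around the discrepancy: it is forced to present either two distinct leaf lists hashing to the same leaf value, or two distinct child-pairs hashing to the same internal node. Either event is a collision in $H$, and modeling $H$ as a uniform random function over $q$ adversarial queries, the union and birthday bounds cap the probability that any of the at most $D$ affected nodes collides at $O(q^2 D/2^n) = \negl$.

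Consequently every spurious injection either surfaces directly in a returned leaf list---detected immediately---or forces a hash collision that occurs only with negligible probability, giving detection with probability $1-\negl$. The step I expect to be most delicate is the injection-at-a-fresh-leaf case (a spurious plugin placed where the developers hold no legitimate plugin of their own): there the developers must notice a mismatch between the FTL-served root and the root they independently expect, so the reduction relies on their comparing against their own expected root and on path uniqueness to prevent the FTL from presenting an inconsistent tree shape. Making this evasion-forces-a-collision argument airtight across \emph{all} leaves, rather than only those shared with a legitimately issued plugin, is the crux of the proof.
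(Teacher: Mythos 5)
Your proposal takes essentially the same approach as the paper's proof: both invoke Theorem~\ref{theo:auth_path} to rule out a forged authentication path, conclude that tampering must therefore surface either directly in the served leaf list or as a root mismatch (barring a hash collision), and obtain the $\Theta(D + \frac{N}{2^D})$ cost from the $D$ hashes along the path plus the expected per-leaf load of $\frac{N}{2^D}$ under uniform hashing. The fresh-leaf case you flag as the crux is a genuine subtlety, but the paper's proof does not treat it either---it simply asserts that the adversary ``can only modify the legitimate leaf'' and moves on.
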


\begin{proof}
  $\adv$ succeeds to hide $plugin^{\adv}||meta-info^{\adv}$ with probability
  $\negl$ (directly from Theorem~\ref{theo:auth_path}). Hence, the adversary 
  can
  only modify the legitimate leaf, which gets detected at the re-computation of
  the Tree's root using the authentication path (i.e., the re-computed root does
  not match the expected one using the legitimate plugin).

	The recomputation involves $D$ steps to hash each level of the Tree.
Assuming $H$'s output is uniform, the load on each leaf (i.e., the number of
plugins) is expected to be $\frac{N}{2^D}$ on average. Therefore, the FAN
	developers can detect a spurious plugin with probability $1-\negl$  with complexity $\Theta(D + \frac{N}{2^D})$.
\end{proof}

Theorem~\ref{theo:spurious_plug} guarantees that any member of the network 
can
directly catch any attempt to manipulate the proof of availability for a 
given
name. Moreover, if any relay transmits a proof of availability
to a peer relay with a forged bytecode, the peer relay would detect it without
the need to interact with FTLs and without requiring asymmetric cryptography
operations. These facts display the efficiency of our design and the above
theorems offer secure plugin names.

Another important property provided by the FAN Plugin Transparency system is
its non-equivocation. That is, it is not possible for the FTLs to trick one 
party
into believing that a plugin is included and another party to believe the 
plugin is not.

\begin{theorem}
  Under the assumption that $H:\{0,1\}^* \rightarrow \{0,1\}^n$ behaves as a
  random oracle and is collision-resistant, then an FTL admits a unique root
  value for a known depth $D$ and set of included plugins.
  \label{theo:single_root}
\end{theorem}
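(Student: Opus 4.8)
The plan is to show that the root is a deterministic function of the pair $(D, S)$, where $S$ is the set of included plugins, and then to invoke collision-resistance to rule out any second admissible root. First I would fix the leaf assignment: for each plugin in $S$ with name $name$, its leaf index is the $D$-bit truncation of $H(name)$. Because $H$ is a function---a random oracle returns the same output on repeated queries---this index is uniquely determined, so the partition of $S$ across the $2^D$ leaves is fixed. Within a leaf, the colliding plugins are placed in the prescribed alphabetical order, which is a canonical total order; hence the leaf value is itself uniquely determined by $S$ alone. This step needs only that $H$ is deterministic and that names are fixed.

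Next I would proceed by induction on the tree levels, from the leaves up to the root. Each internal node value is $H(h_0 || h_1)$, a deterministic function of its two children; since the children are uniquely determined by the plugins in their respective subtrees (base case: the leaf values above), the node value is uniquely determined as well. Unrolling the induction to level $0$ yields a single canonical root $R$, establishing existence and well-definedness of the root for $(D, S)$.

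To rule out a second admissible root, suppose toward a contradiction that the FTL could make a verifier accept a committed root $R' \neq R$ for the same $S$ and depth $D$. Since every plugin's leaf value is pinned by the two steps above, reconstructing $R'$ from any authentication path can differ from the canonical reconstruction of $R$ only in the sibling hashes supplied. I would trace a path from the root downward to the first level at which the reconstructed value departs from the canonical one; at that junction two distinct concatenations hash to equal node values (or, at the top, two distinct inputs hash to the same committed root), exhibiting an explicit collision and contradicting collision-resistance. Because names are fixed and $H$ is deterministic, no plugin can be relocated to a different leaf, so a collision is the only remaining route to a divergent root; hence $R' = R$.

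The hard part will be making this last step fully rigorous, since a malicious FTL is free to reveal arbitrary sibling hashes in each \emph{individual} authentication path, and need not be consistent across separate queries. The crux is that producing a verifying path to a \emph{prescribed} target root $R'$ requires either a preimage of $R'$ under $H$---infeasible under the random-oracle assumption---or reusing the genuine, pinned subtree values, in which case the first divergence from $R$ exposes the collision. Pinning down this case analysis, and ensuring it covers the FTL's per-query freedom, is where the argument must be most careful.
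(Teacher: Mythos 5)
Your first step (determinism: leaf placement fixed by the truncated hash of the name, canonical alphabetical ordering inside each leaf list, induction up the tree) is sound, and it already proves the theorem as literally stated: for a fixed depth $D$ and fixed plugin set $S$, the root is a well-defined function of $(D,S)$. This is also a genuinely different route from the paper's, whose proof is a short probabilistic argument in the random-oracle model: it views the root computation as successive applications of $H$ and union-bounds the probability of a collision arising among the $D$ applications by $\frac{D}{2^n}$, concluding uniqueness with probability $1-\frac{D}{2^n}$. Your deterministic argument is cleaner for this direction and needs no union bound.

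The gap is in your second step, and it is a real one: collision-resistance cannot ``rule out a second admissible root for the same $S$'' the way you argue, because no hash-breaking is needed to produce one. A malicious FTL can simply build a \emph{different} tree---say, the canonical tree for $S$ with one spurious leaf added, or one leaf's meta-info altered---and serve authentication paths that verify perfectly against its root $R'\neq R$. A single authentication path pins only the queried leaf and $D$ sibling values; it does not pin the rest of $S$, so your claim that reaching a prescribed $R'$ requires ``a preimage of $R'$ or reuse of the pinned subtree values'' is false. What collision-resistance actually buys is the converse (binding) direction: if two verifying openings against the \emph{same} root disagree on some leaf's contents (e.g., one path shows plugin $p$ present at its leaf, another shows that same leaf without $p$), then walking upward to the first point where the two computations merge into a common value exhibits an explicit collision. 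That converse, together with your first step, is what the paper's non-equivocation application needs: same set $\Rightarrow$ same root by determinism, and same root $\Rightarrow$ same set except with negligible probability by binding. A ``different root for the same contents'' event is already excluded by determinism alone and requires no cryptography, so your tracing machinery is the right tool pointed in the wrong direction.
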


\begin{proof}
  Let $TR : \{0, 1\}^*_{0} \times \dots \times \{0, 1\}^*_{2^D} \rightarrow \{0,
  1\}^n$ a function that takes in input the Tree leaves and computes the Tree
  root with successive applications of $H$. At each $H$ application, a collision
  with the root value may happen with probability $\frac{1}{2^n}$. By the union
  bound, the probability that the root is unique is given by $1-\frac{D}{2^n} =
  1 - \negl$.
\end{proof}

From Theorem~\ref{theo:single_root} and assuming a broadcast channel to exchange
FTL's root value (e.g., the Tor Consensus Document), then an equivocation would
lead to an observable root mismatch.

\section{Design Implementation}
\label{app:design_impl}
Figure~\ref{fig:pluginfile} gives an example of meta-info within our FAN implementation. The first line is 
the
sandboxed memory the plugin requires, which gives the right to all code in the
plugin to allocate, share and access data within a specific memory space of $N$
bytes to be specified in the plugin meta-information and authorized by the
plugin manager. Each of the subsequent lines attaches an \textit{entry point}
to an internal hook within the Tor binary.

\section{Simple Plugin Code Example}
\label{cf-code}

An example of a \emph{hook} in C-Tor written in the Relay protocol to intercept
any unknown Relay message is given on Figure~\ref{fig:hook}.

\begin{figure}[H]
    \footnotesize
\begin{minted}{C}
entry_point_map_t pmap;
memset(&pmap, 0, sizeof(pmap));
pmap.ptype = PLUGIN_DEV;
pmap.putype = PLUGIN_CODE_ADD;
pmap.pfamily = PLUGIN_PROTOCOL_RELAY;
pmap.entry_name = (char*)"relay_process_edge_unknown";
caller_id_t caller = RELAY_PROCESS_EDGE_UNKNOWN;
relay_process_edge_t args;
args.circ = circ;
args.layer_hint = layer_hint;
args.edgeconn = conn;
args.cell = cell;

if (invoke_plugin_operation_or_default(&pmap,
    caller, (void*)&args)) {
  log_fn(LOG_PROTOCOL_WARN, LD_PROTOCOL,
      "Received unknown relay command \%d. But"
      " we do not have a developer plugin"
      " able to handle it, destroy the circuit",
      rh->command);
  return -1;
}
return 0;
\end{minted}
\caption{Hook located in Tor's Relay protocol to intercept any unknown Relay
  cell, and destroy the circuit if a plugin does not exist to handle it.
  Originally, the unknown message is ignored as per the Robustness principle.}
\label{fig:hook}
\end{figure}

Furthermore, Figure~\ref{fig:code_example} shows an example of a function
(rather than a message) that we may intercept with a plugin, assuming the
original function call is defined as the default code of a hook. On the right
the original code, on the left the same code written in the C subset compiling
to eBPF.

\begin{figure*}[t]
  \footnotesize
  \begin{minted}[mathescape,
               linenos,
               numbersep=5pt,
               frame=lines,
               framesep=2mm]{text}
memory 16777216
circpad_global_machine_init protocol_circpad add circpad_dropmark_def.o
circpad_setup_machine_on_circ_add protocol_circpad add circpad_dropmark_circ_setup.o
relay_process_edge_unknown protocol_relay add circpad_dropmark_receive_sig.o
connedge_connection_ap_handshake_send_begin_add protocol_conn_edge param 1 add circpad_dropmark_send_sig.o
connedge_received_connected_cell_add protocol_conn_edge param 2 add circpad_dropmark_send_sig.o
circpad_send_padding_cell_for_callback_replace protocol_circpad replace circpad_dropmark_send_padding_cell.o
  \end{minted}
  \caption{Example of a .plugin file containing meta-information for the plugin
	to compile and link with Tor's code. The first line is the maximum heap memory that the plugin can allocate (in bytes; $16$ MiB here). The following lines announce one entry point each. The first element is the Hook name. The second element is the protocol name, the third element is an operation over the hook (e.g., add or replace). There is an optional parameter argument and the final element is the bytecode filename.}
  \label{fig:pluginfile}
\end{figure*}

\begin{figure*}
  \begin{minipage}[t]{.58\textwidth}
    \footnotesize
\begin{minted}[breaklines]{c}
#include "core/or/or.h"
#include "core/or/relay.h"
#include "core/or/plugin.h"
#include "core/or/plugin_helper.h"

uint64_t
sendme_circuit_data_received(relay_process_edge_t *pedge) {
  int deliver_window, domain;
  circuit_t *circ = (circuit_t *) get(RELAY_ARG_CIRCUIT_T, 1, pedge);
  crypt_path_t *layer_hint = (crypt_path_t *) get(RELAY_ARG_CRYPT_PATH_T, 1, pedge);

  if ((int) get(UTIL_CIRCUIT_IS_ORIGIN, 1, circ)) {
    deliver_window = (int) get(RELAY_LAYER_HINT_DELIVER_WINDOW, 1, layer_hint);
    set(RELAY_LAYER_HINT_DELIVER_WINDOW, 2, layer_hint, --deliver_window);
    domain = LD_APP;
  } else {
    deliver_window = (int) get(RELAY_CIRC_DELIVER_WINDOW, 1, circ);
    set(RELAY_CIRC_DELIVER_WINDOW, 1, circ, --deliver_window);
    domain = LD_EXIT;
  }

  log_fn_(domain, LD_PLUGIN, __FUNCTION__,
    "Circuit deliver_window now %d.", deliver_window);
  return 0;
}
\end{minted}
\end{minipage}
  \hfill
\begin{minipage}[t]{.38\textwidth}
  \footnotesize
\begin{minted}[breaklines]{c}





int
sendme_circuit_data_received(circuit_t *circ, crypt_path_t *layer_hint) {
  int deliver_window, domain;

  if (CIRCUIT_IS_ORIGIN(circ)) {
    tor_assert(layer_hint);
    --layer_hint->deliver_window;
    deliver_window = layer_hint->deliver_window;
    domain = LD_APP;
  } else {
    tor_assert(!layer_hint);
    --circ->deliver_window;
    deliver_window = circ->deliver_window;
    domain = LD_EXIT;
  }

  log_debug(domain,
    "Circuit deliver_window now %d.", deliver_window);
  return deliver_window;
}
\end{minted}
\end{minipage}
\caption{Code example of the smallest function we intercept in one of our plugins
  to evaluate the CPU overhead. The left function is defined in a separate .c
  file from the main Tor source code and compiled to a BPF bytecode object. It
  uses a get/set API to indirectly interact with values outside its
  allocated memory, upon the control of the main binary. The right function is
  the one existing in Tor's source code.} \label{fig:code_example}
\end{figure*}

\end{document}